\newcommand{\h}{{\mathbf{h}}}
\newcommand{\x}{{\mathbf{x}}}
\newcommand{\y}{{\mathbf{y}}}
\newcommand{\vv}{{\mathbf{v}}}
\newcommand{\uu}{{\mathbf{u}}}
\newcommand{\f}{{\mathbf{f}}}
\newcommand{\p}{{\mathbf{p}}}
\newcommand{\s}{{\mathbf{s}}}
\newcommand{\M}{{\mathcal{M}}}
\newcommand{\N}{{\mathcal{N}}}
\newcommand{\LL}{{\mathcal{L}}}
\newcommand{\CR}{{\mathcal{CR}}}
\newcommand{\RR}{{\mathcal{R}}}
\newcommand{\diag}{{\text{diag}}}
\newcommand{\eig}{{\text{eig}}}
\newcommand{\g}{\mathcal{S}}
\newcommand{\real}{{\mathbb{R}}}
\newcommand{\reals}{\real}
\newtheorem{theorem}{Theorem}[section]
\newtheorem{proposition}[theorem]{Proposition}
\newtheorem{lemma}[theorem]{Lemma}
\newtheorem{remark}[theorem]{Remark}
\newtheorem{example}[theorem]{Example}
\newtheorem{definition}[theorem]{Definition}
\newtheorem{corollary}[theorem]{Corollary}
\newcommand{\mymargin}[1]{{\color{red}*} \marginpar{\color{red}\small\ttfamily $\Lsh$ #1}}
\renewcommand{\mymargin}[1]{}
\begin{document}

\title{Generalized cyclic algorithms for formation acquisition and control}

\author{Jaime L. Ramirez
\footnote{Department of Aeronautics and Astronautics, Massachusetts Institute of Technology} and Jean-Jacques Slotine
\footnote{Nonlinear Systems Laboratory, Massachusetts Institute of Technology}}
\affiliation{
Massachusetts Institute of Technology, 77 Massachusetts Avenue, Cambridge, MA 02139, U.S.A}

\begin{abstract}
  This paper presents a new approach to distributed nonlinear control
  for formation acquisition and maintenance, inspired by recent
  results on cyclic topologies and based on tools from contraction
  theory. First, simple nonlinear control laws are derived to achieve
  global exponential convergence to basic symmetric formations. Next,
  convergence to more complex structures is obtained using control
  laws based on the idea of convergence primitives, linear combinations of basic control elements. All control laws use only local information and communication to achieve a desired
  global behavior.
\end{abstract}

\maketitle 

\section{Introduction}
Multi-vehicle systems which achieve mission objectives by cooperatively
controlling their relative positions have been widely studied in the
distributed control literature \cite{Scharf2004}. Many applications
require achieving a formation. Some of them can as well achieve
the overall mission objective by converging to some manifold without a
need to specifically track relative trajectories. These include, for example,
fragmented telescopes, packs of chaser robots and space missions where
only a structural shape of the formation is important like the Laser
Interferomatric Space Antenna (LISA) mission.
 
There are specific advantages in achieving a formation by using only
{\it local} information, rather than a central coordinating
mechanism. Especially, as the number of agents in the formation is increased, 
combining the information of the whole formation to calculate the control commands becomes challenging.
Thus, recent research has studied the emergence of a
global behavior based on local rules \cite{Jadbabaie2003,
Olfati-saber2006}. However, research on formation control has
mostly focused on approaches that converge to fixed point trajectories by
tracking fixed relative positions to neighboring agents.

The most common approach to formation control studied in the
literature defines laws based on tracking relative position to a set
of neighbors, using a variety of methods to synthesize the controller, e.g.  LQG, H$_\infty$, etc. This approach is not always the most desirable, and the
control effort can often be significantly reduced by eliminating the
'unnecessary' constraints in the formation degrees of freedom, and
converging instead to desired manifolds defined by linear or nonlinear
constraints. Additionally, if the emergent behavior is an overall
geometric state with some unconstrained degrees of freedom, a leader
or a pair of leaders can control those states for the whole formation
without the need of a global coordination mechanism reassigning
relative position targets.

Some authors have studied the convergence to relevant symmetric formations by
using potential functions, e.g. \cite{Olfati-saber2002, Sepulchre2008}. A main
pitfall in that case is convergence to local equilibria, leading to a lack of global
convergence guarantees and unpredictability of the behavior under
disturbances, with these difficulties exacerbated in the time-varying
case.  Guaranteed global convergence to formation, in itself a highly
desirable property, also has important implications in the robustness
of the formation architecture.

In this paper a new approach to distributed formation
acquisition and maintainance with global guarantees is discussed.  By contrast to
more explicit control techniques, this approach does not specify fixed
points trajectories for the formation. Rather, it allows the agents to
achieve a formation under specified {\it structural} constraints and
provides an approach to deriving sufficient conditions for global
convergence. The approach produces global convergence results for
nonlinear control laws and combinations of cyclic grids that achieve
complex formations while maintaining the global convergence
guarantees.

The control algorithms are inspired as a generalization of the cyclic
pursuit control laws recently presented by Pavone and Frazolli
\cite{Pavone2007} and extended by Ren \cite{Ren2008a} which converge to rotating formations. The analysis, based on contraction theory~\cite{Lohmiller1998,Wang2005},
yields global convergence results and direct extensions for nonlinear
systems and more general dynamic cases. It also allows the
introduction of {\it convergence primitives}, where control laws
consist of combinations of simpler control laws, and constraints in
the desired configuration are described as convergence subspaces and
combined to achieve more complex formations.

Since the main objective of the article is the introduction to the system analysis
approach, the focus is on simple integrator dynamics, but also illustrate implementation of the control algorithms for higher order dynamic systems.

Section~\ref{sec:Mathback} defines the notation and some mathematical
tools and background of the theoretical
approach. Section~\ref{sec:IntroToApproach} introduces the contraction
theory approach to the problem and its relations to the previous
literature. Section~\ref{sec:CyclicCont} derives control laws for
global convergence to regular polygonal formations and presents a
result for global convergence to a regular formation of specific size.
Section~\ref{sec:FormPrim} introduces convergence results based on
control primitives, which are presented through applications in
Section~\ref{sec:applications}. Brief concluding remarks are presented
in Section~\ref{sec:Conclusion}.

\section{Background}\label{sec:Mathback}
In this section, definitions and results from matrix theory that used in the derivation of the article's results are provided.

\subsection{Notation}
Let $\reals_{>0}$ and $\reals_{\geq 0}$ denote the positive and nonnegative real numbers, respectively, and denote as $\real(x)$ the real part(s) of the complex element $x$. Let $j \doteq \sqrt{-1}$ and $I_{n}$ denote the identity matrix of size $n$; $A^\top$ and $A^*$ denote, respectively, the transpose and the conjugate transpose of a matrix $A$. A block diagonal matrix with block diagonal entries $A_i$ is denoted $\diag[A_i]$. For an $n \times n$ matrix $A$, let $\text{eig}(A)$ denote the set of eigenvalues of $A$, and we refer to its $k$th eigenvalue as $\lambda_k(A)$.  $\lambda_{max}(A+A^\top) \geq \lambda_{k}(A+A^\top)$ for all $k$ and correspondingly $\lambda_{min}(A): \lambda_{min}(A+A^\top) \leq \lambda_{k}(A+A^\top)$.  In general, a matrix $A(\x,t)$ is positive definite if $\lambda_{min}(A(\x,t)) > 0$, and it is denoted as $A > 0$, equivalently negative definiteness an positive semidefiniteness are defined and respectively denoted $A < 0$, $A \geq 0$.
The state of a single agent $i$ is denoted by $\x_i$ which in general $\x_i \in R^3$ and the overall state of the system will be denoted as $\x=[\x_1^{T},\x_2^{T},\ldots,\x_n^{T}]^{T}$. 

\begin{definition}{Flow-invariant manifolds}\\
A flow invariant manifold $\M$ of a system $\dot \x = f(\x)$ is a manifold such that if $\x(t_o) \in \M$ then $\x(t>t_o) \in \M$. We are interested in flow invariant manifolds that can described as the nullspace of a smooth operator, $\bar \x \in \M \Rightarrow V(\bar \x,t) = 0, \frac{d}{dt}(V(\bar \x,t)) = 0$.
\end{definition}

\subsection{Kronecker Product }\label{sec:kron}
Let $A$ and $B$ be $m \times n$ and $p \times q$ matrices, respectively. Then, the Kronecker product $A \otimes B$ of $A$ and $B$ is the $mp\times nq$  matrix
\begin{equation*}\label{eq:krno}
A \otimes B = \begin{pmatrix}
a_{11}B \, & \,\, \ldots \, \,& \, a_{1n}B\\
\vdots & & \vdots\\
a_{m1}B & \ldots & a_{mn}B
\end{pmatrix}.
\end{equation*}
If $\lambda_A$ is an eigenvalue of $A$ with associated eigenvector $\nu_A$  and $\lambda_B$ is an eigenvector of $B$ with associated eigenvector $\nu_B$, then $\lambda_A\lambda_B$ is an eigenvalue of $A \otimes B$ with associated eigenvector $\nu_A \otimes \nu_B$. Moreover: $(A \otimes B)(C \otimes D) = AC \otimes BD$, where $A$, $B$, $C$ and $D$ are matrices with appropriate dimensions.

\subsection{Rotation Matrices}
A rotation matrix is a real square matrix whose transpose is equal to its inverse and whose determinant is +1. The eigenvalues of a rotation matrix in two dimensions are $e^{\pm j\alpha}$, where $\alpha $ is the  magnitude of the rotation. The eigenvalues of a rotation matrix in three dimensions are $1$ and $e^{\pm j\alpha}$, where $\alpha $ is the  magnitude of the rotation about the rotation axis; for a rotation about the axis $(0, 0 ,1)^T$, the corresponding eigenvectors are $(0, 0, 1)^T, (1, +j, 0)^T (1, -j, 0)^T$. We denote $R(\alpha)$ or $R_{\alpha}$ a rotation matrix of angle $\alpha$.

\subsection{Circulant Matrices}
A circulant matrix $C$ is an $n\times n$ matrix having the form
\begin{eqnarray}
C = \begin{pmatrix}
c_0    & c_1 & c_2 & \ldots &c_{n-1}\\
c_{n-1}& c_0 & c_1 & \ldots &\vdots \\
\vdots &      &      &          & c_1 \\
c_1    & c_2& \ldots &\dots &c_0
\end{pmatrix}
\end{eqnarray}
The elements of each row of $C$ are identical to those of the previous row, but are shifted one position to the right and wrapped around. The following theorem summarizes some of the properties of circulant matrices.
\begin{theorem}[Adapted from Theorem 7 in \cite{Matrix}] \label{thrm:circProp}
Every $n\times n$ circulant matrix $C$ has eigenvectors
\begin{eqnarray}\label{circEig}
\psi_k = \frac{1}{\sqrt{n}}\left( 1, e^{2\pi j k /n},\ldots, e^{2\pi j k(n-1)/ n} \right)^T, \quad k\in\{0,1,\ldots,n-1\}, \label{eq:eigvecCirc}
\end{eqnarray}
and corresponding eigenvalues
\begin{equation}\label{eq:eigCircMat}
\lambda_k = \sum_{p=0}^{n-1}c_p e^{2\pi j k p/n},
\end{equation}
and can be expressed in the form $C = U\Lambda U^*$, where $U$ is a unitary matrix whose $k$-th column is the eigenvector $\psi_k$, and $\Lambda$ is the diagonal matrix whose diagonal elements are the corresponding eigenvalues. Moreover, let $C$ and $B$ be $n \times n$ circulant matrices with eigenvalues  $\{\lambda_{B,k}\}_{k=1}^n$ and $\{\lambda_{C,k}\}_{k=1}^n$, respectively; then,
\begin{enumerate}
\item $C$ and $B$ commute, that is, $CB = BC$, and $CB$ is also a circulant matrix with eigenvalues $\text{\emph{eig}}(CB)=$ $\{\lambda_{C,k} \,\lambda_{B,k}\}_{k=1}^n$;
\item $C + B$ is a circulant matrix with eigenvalues $\text{\emph{eig}}(C+B)= \{\lambda_{C,k} +\lambda_{B,k}\}_{k=1}^n$.
\end{enumerate}
\end{theorem}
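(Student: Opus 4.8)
The plan is to reduce every circulant to a single object. Let $P$ denote the $n\times n$ circulant with $c_1=1$ and all other $c_p=0$; equivalently $P$ is the cyclic shift permutation, $P_{i,\ell}=1$ iff $\ell\equiv i+1 \pmod n$. Then $P^n=I_n$, and comparing entries shows that $P^p$ is exactly the circulant with a single $1$ in position $p$ of its first row, so an arbitrary circulant with first row $(c_0,\dots,c_{n-1})$ is $C=\sum_{p=0}^{n-1}c_p P^p$. In particular every circulant is a polynomial in $P$, and it suffices to diagonalize $P$.

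A direct computation gives $P\psi_k=e^{2\pi jk/n}\psi_k$ for the vectors $\psi_k$ in \eqref{eq:eigvecCirc}: the $i$-th entry of $P\psi_k$ is the $(i+1)$-st entry of $\psi_k$, which is $e^{2\pi jk/n}$ times its $i$-th entry. The $n$ numbers $e^{2\pi jk/n}$ are distinct, so $P$ is diagonalizable, and the geometric-sum identity $\sum_{m=0}^{n-1}e^{2\pi j(k-\ell)m/n}=n\,\delta_{k\ell}$ shows the $\psi_k$ are orthonormal; hence $P=U\Omega U^*$ with $U$ unitary, its $k$-th column $\psi_k$, and $\Omega=\diag[e^{2\pi jk/n}]$. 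Substituting into $C=\sum_p c_p P^p=\sum_p c_p U\Omega^p U^*$ yields $C=U\Lambda U^*$ with $\Lambda=\diag[\lambda_k]$ and $\lambda_k=\sum_{p=0}^{n-1}c_p e^{2\pi jkp/n}$, which is \eqref{eq:eigCircMat}; since $U$ does not depend on the $c_p$, the $\psi_k$ are common eigenvectors of all circulants.

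For the two closure properties, note first that $C$ and $B$, being polynomials in $P$, commute, and $U^*(CB)U=(U^*CU)(U^*BU)=\Lambda_C\Lambda_B$ and $U^*(C+B)U=\Lambda_C+\Lambda_B$ are diagonal, which gives the stated eigenvalue lists on the basis $\{\psi_k\}$. What remains --- and this is the only step requiring a little care rather than direct verification --- is that $CB$ and $C+B$ are again circulant, i.e. that $\{U\Lambda U^*:\Lambda \text{ diagonal}\}$ coincides with the space of circulants. One inclusion is the computation above. For the reverse, the linear map $\Lambda\mapsto U\Lambda U^*$ is injective on diagonal matrices, so its image has dimension $n$; it contains $I_n=P^0,P,\dots,P^{n-1}$, which are linearly independent and span the $n$-dimensional space of circulants, so the image is precisely that space. (One could instead argue directly via $P^aP^b=P^{(a+b)\bmod n}$ and re-collecting coefficients, but the dimension count is cleaner.)
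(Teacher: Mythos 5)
Your proof is correct and complete. Note, though, that the paper does not actually prove this statement: it is imported as an adapted citation of Theorem 7 in Gray's review of Toeplitz and circulant matrices, so there is no internal proof to compare against. Your route — writing every circulant as a polynomial $C=\sum_{p}c_pP^p$ in the cyclic shift $P$, checking $P\psi_k=e^{2\pi jk/n}\psi_k$ directly, and using orthonormality of the $\psi_k$ to get $C=U\Lambda U^*$ — is the standard algebraic argument; Gray instead obtains the same spectrum by solving the difference equations coming from the eigenvalue relation $C\psi=\lambda\psi$ with the ansatz $\psi_m=\rho^m$, $\rho^n=1$, which is equivalent in content but less structural. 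Your treatment of the closure properties is also sound: commutativity is immediate from $C$ and $B$ being polynomials in $P$, and your dimension count correctly identifies $\{U\Lambda U^*:\Lambda\ \text{diagonal}\}$ with the $n$-dimensional space of circulants, so that $CB$ and $C+B$ are again circulant with eigenvalue lists $\{\lambda_{C,k}\lambda_{B,k}\}$ and $\{\lambda_{C,k}+\lambda_{B,k}\}$ on the common eigenbasis; the alternative you mention, re-collecting coefficients via $P^aP^b=P^{(a+b)\bmod n}$, works just as well. The one point worth making explicit if this were to stand alone is the index convention $C_{i,\ell}=c_{(\ell-i)\bmod n}$ for the paper's circulant layout, which is what makes your identification of $P$ and of $P^p$ as the "single $1$ in position $p$" circulant line up with the displayed matrix.
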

From Theorem \ref{thrm:circProp} all circulant matrices share the same eigenvectors, and the same matrix $U$ diagonalizes  \emph{all} circulant matrices.

\subsection{Block Rotational-Circulant Matrices}\label{sec:rotcirc}
The set of matrices that can be written as $L\otimes R$ where $L$ is a circulant matrix and $R$ is a rotation matrix about a fixed axis all belong to a group of matrices that we denote as $\CR$. The set $\CR$ forms a conmutative matrix algebra, since the eigenvectors of $L$ are the same for all circulant $L$ following Section \ref{sec:kron} and the eigenvectors of $R$ are the same for all rotation matrix $R$ that share an axis of rotation.

\subsection{Contraction theory} \label{subsec:ContTheo} The basic
contraction theory analysis tool is the result derived in
\cite{Lohmiller1998}, which we state here in a simplified form.
\begin{theorem}{\bf Contraction \cite{Lohmiller1998}}\\
Consider the deterministic system in $\real^n$:
\begin{eqnarray}
\dot \x = \f(\x, t) \label{eq:basicCsys}
\end{eqnarray}
where $\mathbf f$ is a smooth nonlinear function. Denote the Jacobian
matrix of $\mathbf f$ with respect to $\x$ by $\frac{\partial \mathbf f
}{\partial \x}$. If there exists an invertible square matrix $\Theta(\x,t)$, 
such that the matrix:
 \begin{eqnarray}
\mathbf F = \Theta \frac{\partial \mathbf f }{\partial \x} \Theta^{-1}
\end{eqnarray}
is uniformly negative definite, then all the system's trajectories converge exponentially to a single trajectory. The system is said to be contracting.
\end{theorem}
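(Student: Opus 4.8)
The plan is to carry out the standard variational (differential) argument underlying contraction theory. First I would write the variational equation associated with~\eqref{eq:basicCsys}: an infinitesimal displacement $\delta\x$ between two neighbouring solutions of $\dot\x=\f(\x,t)$ obeys $\delta\dot\x=\frac{\partial\f}{\partial\x}\,\delta\x$. Passing to the transformed displacement $\delta\mathbf{z}=\Theta(\x,t)\,\delta\x$ and differentiating gives $\delta\dot{\mathbf{z}}=\bigl(\dot\Theta+\Theta\,\frac{\partial\f}{\partial\x}\bigr)\Theta^{-1}\,\delta\mathbf{z}$, i.e. $\delta\dot{\mathbf{z}}=\mathbf{F}\,\delta\mathbf{z}$ with $\mathbf{F}$ the generalized Jacobian; when $\Theta$ is taken constant (as in this simplified statement) this is exactly $\mathbf{F}=\Theta\,\frac{\partial\f}{\partial\x}\,\Theta^{-1}$, and in general the hypothesis is to be read as negative definiteness of this $\mathbf{F}$.

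Second, I would run a quadratic estimate on the squared transformed length $\delta\mathbf{z}^\top\delta\mathbf{z}=\delta\x^\top M\,\delta\x$, where $M=\Theta^\top\Theta$. Differentiating along the flow, $\frac{d}{dt}\bigl(\delta\mathbf{z}^\top\delta\mathbf{z}\bigr)=\delta\mathbf{z}^\top\bigl(\mathbf{F}+\mathbf{F}^\top\bigr)\delta\mathbf{z}$. Uniform negative definiteness of $\mathbf{F}$ means, in the paper's convention, that $\lambda_{max}(\mathbf{F}+\mathbf{F}^\top)\le-2\beta$ for some $\beta>0$ independent of $(\x,t)$, so $\frac{d}{dt}\|\delta\mathbf{z}\|^2\le-2\beta\,\|\delta\mathbf{z}\|^2$. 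Gr\"onwall's inequality then yields $\|\delta\mathbf{z}(t)\|\le\|\delta\mathbf{z}(0)\|\,e^{-\beta t}$: every infinitesimal displacement, measured in the $\Theta$-metric, decays exponentially.

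Third, I would lift this to pairs of trajectories at finite distance. Given two solutions $\x_a(t)$ and $\x_b(t)$, join their initial states by a smooth path $s\mapsto\gamma(0,s)$, $s\in[0,1]$, and let the flow carry it to $\gamma(t,s)$; each $s$-section is a solution and $\partial_s\gamma$ solves the variational equation. The $\Theta$-length $\ell(t)=\int_0^1\bigl\|\Theta(\gamma(t,s),t)\,\partial_s\gamma(t,s)\bigr\|\,ds$ then satisfies $\ell(t)\le\ell(0)\,e^{-\beta t}$ by the pointwise-in-$s$ estimate of Step~2, so the geodesic distance between $\x_a(t)$ and $\x_b(t)$ in this metric goes to zero exponentially. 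If in addition $M=\Theta^\top\Theta$ is uniformly bounded above and below (the regularity tacitly needed to speak of convergence to ``a single trajectory''), this metric distance is equivalent to the Euclidean one, whence $\|\x_a(t)-\x_b(t)\|\to0$ exponentially; a Cauchy argument over the whole family of solutions then produces a unique limit trajectory that attracts all others exponentially.

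The routine parts are the variational computation of Step~1 and the scalar Gr\"onwall bound of Step~2. The main obstacle is the passage in Step~3 from infinitesimal to finite displacements and from the $\Theta$-metric to the Euclidean norm: obtaining a genuine single limit trajectory (rather than merely mutual exponential convergence of trajectories) requires uniform bounds $\underline{m}\,I\preceq M\preceq\overline{m}\,I$ on the metric, and one must also be careful whether $\mathbf{F}$ is meant to include the $\dot\Theta$ term; with those caveats made explicit the argument closes.
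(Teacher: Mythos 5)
Your argument is correct and is essentially the canonical proof of this result: the paper does not prove the theorem itself but states it as background, citing Lohmiller and Slotine, and your three steps (variational equation, Gr\"onwall estimate on $\|\Theta\,\delta\x\|^2$, path-length integration to pass from infinitesimal to finite distances) reproduce exactly the argument of that reference. Your two caveats are also well placed: the rigorous statement needs the $\dot\Theta$ term in $\mathbf{F}=\bigl(\dot\Theta+\Theta\frac{\partial \f}{\partial \x}\bigr)\Theta^{-1}$ when $\Theta$ depends on $(\x,t)$, and uniform bounds on $\Theta^\top\Theta$ are needed to translate the contraction metric into Euclidean convergence to a single trajectory --- both are tacit in the paper's simplified statement.
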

An important corollary of this approach is the definition of auxiliary
systems which have as particular solutions the trajectories of the
system being analyzed. Proving that the auxiliary (or virtual) system is
contracting shows that the system of interest converges
to trajectories of the auxiliary system. This is the principle of {\it partial contraction analysis} \cite{Wang2005}, which in particular leads to the following theorem.
\begin{theorem}{\bf Partial contraction \cite{Pham2007}}\label{thm:PartialContraction}\\
Consider a flow-invariant subspace $\M$, and the associated orthonormal projection matrix $V$ onto the orthogonal subspace $\M^\perp$. All trajectories of system (\ref{eq:basicCsys}) converge exponentially to $\M$ if the system 
\begin{eqnarray}
\dot \y =  Vf(V^T\y,t)
\end{eqnarray}
is contracting.
\end{theorem}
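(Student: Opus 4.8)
The plan is to realize the claim as an instance of virtual-system (partial) contraction: construct a single non-autonomous system in the transverse variable $\y = V\x$ that admits \emph{both} the transverse component of an arbitrary real trajectory and the zero trajectory as solutions, and whose contraction is exactly the hypothesis.

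First I would fix a solution $\x(t)$ of \eqref{eq:basicCsys}. Since $\M$ is a linear subspace, $V$ is a constant matrix, $V^TV$ is the orthogonal projector onto $\M^\perp$, and $I - V^TV$ is the projector onto $\M$. Put $\y(t) := V\x(t)$ and $\mathbf{w}(t) := (I - V^TV)\x(t) \in \M$, so that $V^T\y(t) + \mathbf{w}(t) = \x(t)$ identically. Differentiating and using that $V$ is constant,
\begin{equation*}
\dot\y = V\f(\x,t) = V\f\big(V^T\y + \mathbf{w}(t),\, t\big),
\end{equation*}
so $\y(t)$ solves the auxiliary system $\dot\y = V\f(V^T\y + \mathbf{w}(t), t)$ --- this is the system $\dot\y = V\f(V^T\y,t)$ of the statement with the trajectory-dependent $\M$-component $\mathbf{w}(t)$ entering as an exogenous input. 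Next I would check that $\y \equiv 0$ is also a solution of this auxiliary system: since $V^T\cdot 0 + \mathbf{w}(t) = \mathbf{w}(t) \in \M$, flow-invariance of $\M$ (in the form $V\f(\bar\x,t) = 0$ for every $\bar\x \in \M$) gives $V\f(\mathbf{w}(t),t) = 0$.

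Then I would invoke the basic Contraction theorem. The Jacobian of the auxiliary right-hand side with respect to $\y$ is $V\,\tfrac{\partial\f}{\partial\x}(V^T\y + \mathbf{w},\, t)\,V^T$. The hypothesis that $\dot\y = V\f(V^T\y,t)$ is contracting supplies an invertible metric $\Theta_{\y}$ for which $\Theta_{\y}\big(V\,\tfrac{\partial\f}{\partial\x}\,V^T\big)\Theta_{\y}^{-1}$ is uniformly negative definite; since this bound is required uniformly in the state, it continues to hold with the shift $\mathbf{w}(t)$ present (and when $\f$ is linear the dependence on $\mathbf{w}$ disappears entirely, because $V\f(V^T\y + \mathbf{w},t) = V\f(V^T\y,t) + V\f(\mathbf{w},t) = V\f(V^T\y,t)$). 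Hence the auxiliary system is contracting, so any two of its solutions converge exponentially to each other; applied to $\y(t) = V\x(t)$ and $\y \equiv 0$ this yields $\|V\x(t)\| \to 0$ exponentially. Since $\|V\x\| = \operatorname{dist}(\x,\M)$, every trajectory of \eqref{eq:basicCsys} converges exponentially to $\M$, which is the claim.

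The delicate step is the one in the last paragraph: transferring contraction from the nominal auxiliary system to the shifted one that carries the live component $\mathbf{w}(t)$. This is precisely why the negative-definiteness in the Contraction theorem must hold uniformly over the state space --- or at least over a tube around $\M$ containing all trajectories of interest --- and it is automatic in the linear setting underlying the paper's applications. A further subtlety, not needed here but relevant if the linear subspace $\M$ is replaced by a genuine flow-invariant manifold as in the Definition, is that $V$ then becomes state- and time-dependent and the Jacobian acquires extra $\dot V$ terms; keeping $\M$ linear (or passing to a local linearization) sidesteps this and keeps the computation clean.
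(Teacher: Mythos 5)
The paper never proves this statement---it is imported verbatim from \cite{Pham2007}---so the only meaningful comparison is with the standard argument behind that reference, which is exactly the virtual-system route you take: set $\y=V\x$, note that $\y\equiv 0$ solves the same auxiliary system because flow-invariance gives $V\f(\bar\x,t)=0$ on $\M$, and conclude from contraction that the two particular solutions converge to each other, so $\|V\x(t)\|=\operatorname{dist}(\x(t),\M)\to 0$ exponentially. That skeleton, and your observation that in the linear (circulant/Laplacian) setting actually used in the paper the $\mathbf{w}$-dependence cancels, are correct.

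The genuine gap is the transfer step you flag but then wave through: ``since this bound is required uniformly in the state, it continues to hold with the shift $\mathbf{w}(t)$ present.'' Contraction of the reduced system $\dot\y=V\f(V^T\y,t)$ only constrains the projected Jacobian $V\,\frac{\partial\f}{\partial\x}(\x,t)\,V^T$ at base points of the form $\x=V^T\y\in\M^\perp$; the system your argument actually needs, $\dot\y=V\f(V^T\y+\mathbf{w}(t),t)$, has Jacobian evaluated at $\x=V^T\y+\mathbf{w}(t)$, and $\mathbf{w}(t)$ sweeps through $\M$ as the true trajectory evolves, so uniformity in $\y$ does not imply the required bound. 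Indeed, read literally the implication can fail: take $\M$ the $x_1$-axis in $\reals^2$, $V=(0\ \ 1)$, and $\f(\x)=\bigl(0,\ g(x_1)x_2\bigr)^T$ with $g(0)=-1$ but $g(c)=+1$ for some $c$; then $\dot y=V\f(V^Ty)=-y$ is contracting, yet a trajectory starting at $(c,x_2)$ with $x_2\neq 0$ diverges from $\M$. The fix is to state (and use) the hypothesis in the form the paper itself verifies in all its applications, namely $V\,\frac{\partial\f}{\partial\x}(\x,t)\,V^T$ uniformly negative definite over \emph{all} $\x$ and $t$ (equivalently, contraction of the virtual system carrying the exogenous $\M$-component, or of the paper's $V$-projected Jacobian conditions such as eq.~(\ref{eq:condIBS})). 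Under that hypothesis your two-solution argument closes cleanly, e.g.\ via $\dot\y=\bigl(\int_0^1 V\frac{\partial\f}{\partial\x}(\x_\parallel+sV^TV\x,t)V^T\,ds\bigr)\y$ with $\x_\parallel=(I-V^TV)\x$, which gives $\frac{d}{dt}\|\y\|^2\le-2\beta\|\y\|^2$ directly.
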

In this paper,  as in \cite{Pham2007}, the flow-invariant linear subspace $\M$ will typically
represent some synchronized behavior. The result also extends to certain types of nonlinear manifolds.

\section{Contraction theory approach to formation control}\label{sec:IntroToApproach}

First, we frame the  decentralized convergence to formation and point out some of the differences in the structure of the approach to the more general approach.

Consider a basic distributed control law that converge to a formation of vehicles to a regular formation equally phased from each other. Using the most common neighbor differences approach, which in a generalized way is \cite{Olfati-saber}:
\begin{eqnarray}
	\dot \x_i = \sum_{j \in \N_i} w_{ij}(\x_{j} - \x_{i} - \h_{ij}(t))
	\label{eq:ConsensusApproach}
\end{eqnarray}

The agents converge to fixed point trajectories of states defined by the time functions $\x_j(t)-\x_i(t)=\h_{ij}(t)$ which are to be agreed upon. The approach presented by Chung et al. \cite{Chung2009} uses contraction theory to show the convergence to a synchronized formation. In that approach, the vehicles synchronize their trajectories by tracking phase-separated trajectories and in essence their approach generalize a consensus approach for Lagrangian systems. 

Another approach consists on defining the desired geometry of the formation as the stationary points of gradient based laws of potential function \cite{Olfati-saber2002, Paley2008, Izzo2007}. This control approach however, lacks global convergence guaranties. If the information graph is not rigid, such approaches can converge to several different equilibria.

In a different manner, control laws based on the cyclic pursuit approach \cite{Pavone2007}:
\begin{eqnarray}\label{eq:basicCP}
\dot x_i &=& =  kR(\alpha)(\x_{i+1} - \x_i)
\end{eqnarray}
globally converge to circular formations and do not require to specify (and agreee upon) fixed trajectories of the states. This allows for degrees of freedom in the formation to be unconstrained, which leads to a reduction in the control effort. The convergence properties of the cyclic pursuit law have been presented in previous work  and several extensions have also been discussed \cite{Ramirez2009}. 

In a more general sense, in this paper we propose an approach to analyse the convergence properties of systems of the type:
\begin{eqnarray*}
	 \dot \x_i &=& f(\x)+ \uu_i\\
	 \uu_i &=& \sum_i A_{ij}(\x,t)( \x_j - \x_i)
\end{eqnarray*}
by defining invariant manifolds to which convergence is to be shown and using the results from partial contraction theory to show the convergence properties.

As an introduction to the contraction theory approach, consider the relation to a case presented in the context of contraction theory in the work of Pham and Slotine \cite{Pham2007}. In the context of contraction theory they presented an example which shows some relation to the cyclic pursuit algorithm. Specifically, the example in sec. 5.3 in \cite{Pham2007} analyzes a system of three coupled nonlinear oscillators:
\begin{eqnarray}
	\dot \x_1 &=& f(\x_1) + k(R(2\pi/3)\x_2 - \x_1)) \nonumber\\
	\dot \x_2 &=& f(\x_2) + k(R(2\pi/3)\x_3 - \x_2)) \label{eq:Pham53} \\
	\dot \x_3 &=& f(\x_3) + k(R(2\pi/3)\x_1 - \x_3)) \nonumber
\end{eqnarray}
where $R(\alpha)$ is a rotation matrix for an angle $\alpha$ and studies its convergence to the flow-invariant manifold:
\begin{eqnarray}\label{M_Pham}
\M &=& \left\{ \bar \x = \begin{bmatrix}R(\frac{2\pi}{3}) \x^\top, & R(\frac{\pi}{3})\x^\top&  \x^\top \end{bmatrix}^\top, \x \in \reals^2 \right\}
\end{eqnarray}

Now, considering a first order system with a basic cyclic pursuit control law in eq. (\ref{eq:basicCP}):
\begin{eqnarray}
\dot \x_i &=& = f(\x_i) - kR(\alpha)(\x_{i+1} - \x_i)\nonumber \\
\Rightarrow  {\dot \x} &=& \diag[f(\x_i)] - (kL_1\otimes R(\alpha)) \x =\diag[f(\x_i)] - \LL_3\x \label{eq:3CP}
\end{eqnarray}
where $L_1$ is the cyclic Laplacian:
\begin{eqnarray}\label{eq:L1}
L_1 = \begin{pmatrix} 1 &-1& 0&\ldots& 0\\ 0& 1 & -1& \ldots& 0\\ \vdots & & &&\vdots\\ -1 &0& 0 & \ldots& 1  \end{pmatrix}.
\end{eqnarray}
and $\M$ is an invariant set of $\diag[f(x_i)]$, i.e. $\diag[f(\bar x_i)] = 0$ for all $\bar \x \in \M$.

Analyzing the case of 3 nodes as well, the most obvious approach would be to consider the same invariant manifold in eq.~\ref{M_Pham}. The calculation of the contaction properties to such manifold show an inconclusive result though, as the corresponding Jacobian of the dynamics of $\y = V \x$ is only negative semi-definite.
Instead, consider a different flow-invariant subspace, namely, the set of regular n-polygon formations with origin anywhere in the plane:
\begin{eqnarray} \label{eq:flowInvariantM}
\M_n = \{ \bar \x: (\x_{i+1} - \x_i) = R_{2\frac{\pi}{n}}(\x_{i+2} - \x_{i+1})  \}, \quad \forall i < n-1
\end{eqnarray}
where $R_{\alpha} = R(\alpha)$. $\M_n$ describe a manifold of all the states where the vehicles are in a regular polygon formation as shown in figure \ref{fig:M1}.
\begin{figure}
\centering
  \includegraphics[width=5cm]{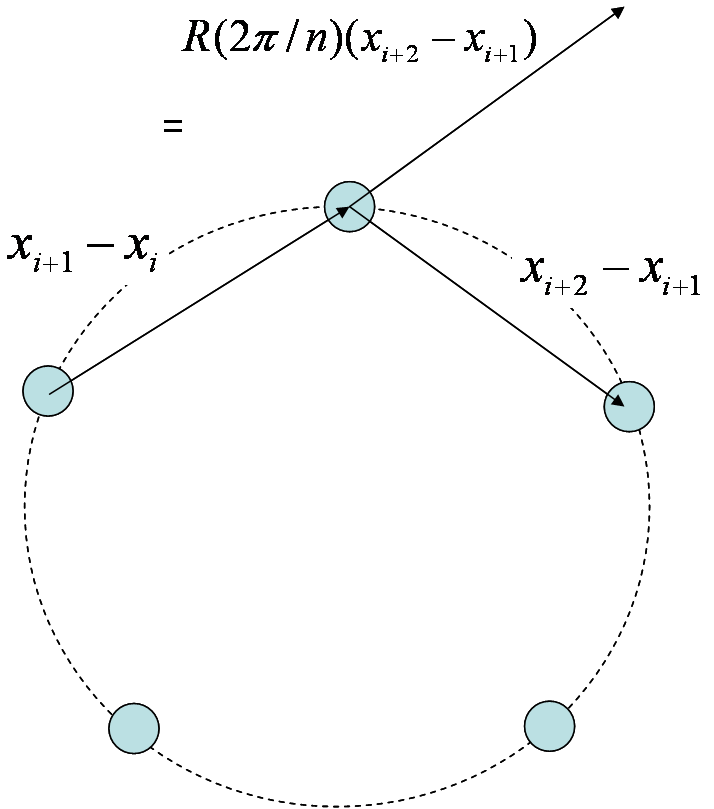}
  \caption{Constraint description of $\M_1$}\label{fig:M1}
\end{figure}
Notice that $\M \subset \M_n$.

$\M_n$ can be shown to be a flow-invariant manifold for the cyclic pursuit law:
\begin{eqnarray}
\dot \x_{i+1} - \dot \x_i &=& R(\alpha)(\x_{i+2} - \x_{i+1}) -  R_{\frac{2\pi}{n}}(\x_{i+1} - \x_{i})\nonumber \\
&=& R_{\frac{2\pi}{n}}R(\alpha)(\x_{i+3} - \x_{i+2}) -  R_{\frac{2\pi}{n}}R(\alpha)(\x_{i+2} - \x_{i+1})\nonumber \\
&=& R_{\frac{2\pi}{n}}(\dot \x_{i+2}) -  R_{\frac{2\pi}{n}}(\dot \x_{i+1})\nonumber \\
\dot \x_{i+1} - \dot \x_i &=& R_{\frac{2\pi}{n}}(\dot \x_{i+2} - \dot \x_{i+1}) \label{eq:MisInvManifold}
\end{eqnarray}

A matrix $V_{rn}$, such that $V_{rn}\bar \x = 0 \Leftrightarrow \bar \x \in \M_n$ is:
\begin{eqnarray}\label{Vrn}
V_{rn} = \begin{pmatrix} -I & I+R_{\frac{2\pi}{n}} & -R_{\frac{2\pi}{n}} & 0 &\ldots& 0  \\  0 & -I & I+R_{\frac{2\pi}{n}} & -R_{\frac{2\pi}{n}} & 0 &\ldots  \\ \vdots & \vdots & \vdots & \vdots & \vdots & \vdots \end{pmatrix}
\end{eqnarray}

In the case of 3 vehicles, with $\alpha = \pi/3$, $V_{r3}$ corresponds to:
\begin{eqnarray}
V_{r3} &=& \begin{pmatrix} -1& 0& 3/2 &\sqrt{3}/2 & -1/2& -\sqrt{3}/2 \\ 0 & -1 & -\sqrt{3}/2 & 3/2 & \sqrt{3}/2 & -1/2 \end{pmatrix},
\end{eqnarray}
and the projected Jacobian is:
\begin{eqnarray}
V_{r3}\LL_{3}V_{r3}^\top =\begin{pmatrix} 6 &3\sqrt{3} \\ 3\sqrt{3}& 6 \end{pmatrix}>0
\end{eqnarray}
which from Theorem \ref{thm:PartialContraction} verifies the global convergence of system in eq. \ref{eq:3CP} to manifold $\M_n$ if $\lambda_{max} (V_{r3} \frac{d\f(\x)}{d\x} V_{r3}^\top ) < 6$. Similar results are also found for more than three vehicles and in the next section we show a generalized result for any number of vehicles and more complex cyclic interconnections under state and time varying coupling matrices $R(\x,t)$.

\section{Cyclic controllers for convergence to formation} \label{sec:CyclicCont}

\subsection{ Generalized cyclic approach to formation control} \label{subsec:secondOrder}

Theoretical results for the convergence to symmetric formations based on control laws that generalize the cyclic pursuit algorithm to more general interconnections and nonlinear cases are presented in this section.  First we show the global convergence of a basic control law to regular polygons under a generalized nonlinear cyclic topology with any number of vehicles and then we show how this result directly verifies the global convergence to rotating circular formations in the case of the basic cyclic pursuit algorithm.

This control law generalizes the results and allows for the design of distributed algorithms that converge to formations with geometric characteristics that depend on a common coordination state, and can be time varying. One can think for example a satellite formation that expands, contracts (by varying $\alpha$) or speeds up (by varying $k$) as a function of its location in orbit.

Consider the first order system $\dot \x = f(\x) + \mathbf u$ and the  generalized symmetric cyclic control law:
\begin{eqnarray}\label{eq:controllaw}
\uu_i(\x,t) &=&  \sum_{m\in \N_r} k_m(\x,t)\left(R_m(\x,t)( \x_{[i+m]} - \x_i) + R_m^\top(\x,t)(\x_{[i-m]} - \x_i )\right)
\end{eqnarray}
where $\N_r$ is a set of relative neighbors in the ordered set $\{1,..,N\}$, and $[p] \in \{1,..,N\}$ indicates $p$ modulo $N$.  The expression in eq. (\ref{eq:controllaw}) indicated that for each link $\{i,[i+m]\}$ there is a symmetric link $\{i,[i-m]\}$. $k_m(\x,t) \in \reals_{\geq 0}$ is a gain and $R(\x,t)$ is a coupling matrix that can be selected to achieve different behaviors. A general description of the overall dynamics of a system can then be writen as:
\begin{eqnarray}\label{eq:OveralRegPoly}
\dot \x &=& \f(\x) - \sum_m k_m(\x,t)((L_m \otimes R_m(\x,t) + L_m^\top \otimes R_m^\top(\x,t)))\x \nonumber \\
&\doteq& \f(\x) -\sum_m k_m(\x,t) (\LL_m(\x,t)+\LL^\top_m(\x,t))\x
\end{eqnarray}
and let us define:
\begin{eqnarray}
 \LL_{sm}(\x,t) &\doteq& (\LL_m(\x,t)+\LL^\top_m(\x,t))
\end{eqnarray}
where $\x$ is the vector describing the overall state of the system and $L_m$ are m-circulant Laplacian matrices describing cyclic underlying topologies with interconnections to each $m$-other agent as shown in fig. \ref{fig:ExplLaplacians}.
\begin{figure}
	\centering
		\includegraphics[width=0.6\textwidth]{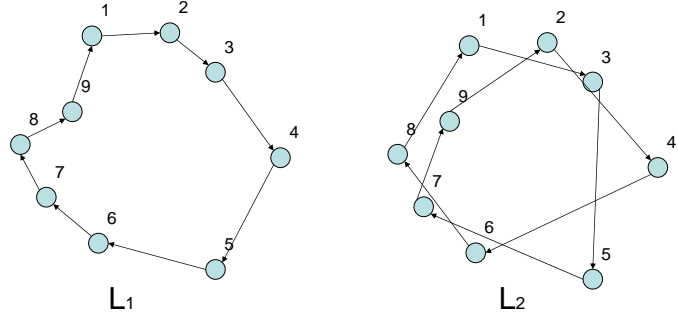}\\
	\caption{ \label{fig:ExplLaplacians} Different cyclic topologies} 
\end{figure}

Now, consider the manifold $\M_n$, presented in section \ref{sec:IntroToApproach}:
\begin{eqnarray} \label{eq:flowInvariantM}
\M_n = \{ \bar \x: (\x_{i+1} - \x_i) = R_{\frac{2\pi}{n}}(\x_{i+2} - \x_{i+1})  \} =  \{ \bar \x: V_{rn}\bar \x =0  \},
\end{eqnarray}
which can be shown (straightforward from eq. (\ref{eq:MisInvManifold})) to be a flow-invariant manifold of the dynamic system $\dot \x = k(\x,t) (\LL_{sm}(\x,t)) \x$.

Note that $V_{rn}$ can be written for the general case of $n$ vehicles in $3D$ as:
\begin{eqnarray}
V_{rn}&=& (W \otimes I_3)(I_n \otimes R_\eta)((L_1\otimes R_{\pi/N}) + (L^\top_1\otimes R^\top_{\pi/N})) \doteq  W_n \RR_\eta \LL^{(N)}
\end{eqnarray}
with $W = [I_{n-3} | 0_{3n-3}]$, $L_1$ is a cyclic Laplacian for a 1-circulant topology and $R_{\eta}$ is a rotation matrix for any value $\eta$.

Since $V_{rn}$ is full row rank, there exists an invertible transformation $V_{rn}=T\bar V_{rn}$ such that if $\bar V_{rn}A(\bar V_{rn})^\top>0$ then $V_{rn}AV_{rn}^\top>0$, where the columns of $\bar V_{rn}$ are a set of  orthonormal basis such that $\bar V_{rn}(\bar V_{rn})^\top = I$.
Then, following the results of partial contraction theory described in sec \ref{sec:IntroToApproach}, we have that if:
\begin{eqnarray}
 W_n \RR_{\eta} \LL^{(N)} \left(\frac{d\f(\x)}{dx} - \sum_m k_m(\x,t)\LL_{sm}(\x,t)\right) (\LL^{(N)})^\top \RR_{\eta}^\top W_n^\top < 0\quad uniformly
\end{eqnarray}
then, the system converges to $\M_n$. 

Since $\RR_{\eta}$, $\LL^{(N)}$, $\LL_{sm}(\x,t) \in \CR$, following the result in sec.\ref{sec:rotcirc} the calculation of the eigenvalues of their product and correspondingly verifying that $ W_n \RR_{\eta} \LL^{(N)} (\sum_m k_m(\x,t)\LL_{sm}(\x)) (\LL^{(N)})^\top \RR^\top_{\eta} W_n^\top < 0\ uniformly$ is straightforward and is shown in appendix \ref{app:A}. The results are applied in the following theorem:

\begin{theorem}{\bf Distributed nonlinear approach for global convergence to symmetric formations}\\
Consider the distributed system with a generalized cyclic topology, using control law in eq. (\ref{eq:controllaw}):
\begin{eqnarray}
\dot \x_i &=& \f(\x) + \mathbf u_i(\x,t) \\
\mathbf u_i(\x,t) &=& \sum_{m\in \N_r} k_m(\x,t)(R_(\alpha_m(\x,t))(\x_{[i+m]} - \x_i) + R(-\alpha_m(\x,t))(\x_{[i-m]} - \x_i))
\end{eqnarray}
for which a description of the overall dynamics is written as:
\begin{eqnarray}
\dot \x = \f(\x) - \sum_m k_m(\x,t) \LL_{sm}(\x,t) \x
\end{eqnarray}
$\x$ is the vector describing the overall state, $\bar \x$ is a particular regular polygonal state, and $V_{rn}\f(\bar \x)=0$.
Then, if:
\begin{equation}
\begin{split}\label{eq:condIBS}
&\sup_{\p,t} \left\{ \lambda_{max} \left( \RR_{\eta}\LL^{(N)}\frac{\partial \f(p)}{\partial x}(\LL^{(N)})^\top \RR^\top_\eta\right) \right.\\
&\left. - \min_{{\buildrel {1<i<N}\over{q\in\{-1,0,1\}}}}\!\!\!4\left(1\!-\!\cos\left(\frac{2i\pi}{N} \right)\right)\sum_{m}k_m(\p,t)\left(\cos\left(q\alpha_m(\p,t)\right)\! -\! \cos\left(q\alpha_m(\p,t)\!-\frac{2im\pi}{N}\right)\right)\right\}<0 
\end{split}
\end{equation}
for some $\eta$, the system globally converges to a regular polygon.

Specifically, if $\f(\x) = 0$ the system globally converges to a regular polygon formation if:
\begin{eqnarray}
\sup_{\p,t} \sum_{m} k_m(\p,t)(\cos(q\alpha_m(\p,t)) - \cos(q\alpha_m(\p,t) - 2im\pi/N)) &>& 0\quad
\end{eqnarray}
$\forall q \in \{-1,0,1\},\ i \in\{2,N-1\}$.
\end{theorem}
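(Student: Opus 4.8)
The plan is to apply Theorem~\ref{thm:PartialContraction} (partial contraction) to the system with flow-invariant manifold $\M_n$, which amounts to showing that the projected Jacobian $V_{rn}\big(\tfrac{\partial \f}{\partial \x} - \sum_m k_m \LL_{sm}\big)V_{rn}^\top$ is uniformly negative definite (after the rescaling to the orthonormal $\bar V_{rn}$, which by the stated full-row-rank argument preserves definiteness). The starting point is the factorization $V_{rn} = W_n \RR_\eta \LL^{(N)}$ together with the fact that $\RR_\eta$, $\LL^{(N)}$, and every $\LL_{sm}(\x,t)$ lie in the commutative algebra $\CR$ of block rotational-circulant matrices (Section~\ref{sec:rotcirc}). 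So the first step is: diagonalize everything simultaneously. Using Theorem~\ref{thrm:circProp} and the Kronecker eigenstructure of Section~\ref{sec:kron}, all these matrices are diagonalized by the common basis $\psi_k\otimes\{$rotation eigenvectors$\}$; in particular the eigenvalues of $\LL_{sm}(\x,t) = L_m\otimes R_{\alpha_m} + L_m^\top\otimes R_{-\alpha_m}$ are computed from $\lambda_k(L_m) = 1 - e^{2\pi j k m/N}$ and the rotation eigenvalues $e^{\pm j q \alpha_m}$ ($q\in\{-1,0,1\}$ in $3D$, $q\in\{\pm1\}$ in $2D$), giving eigenvalues proportional to $(1-\cos(2km\pi/N))$ times $\cos(q\alpha_m) - \cos(q\alpha_m - 2km\pi/N)$ after combining the $m$ and $-m$ contributions and taking the symmetric part. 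This is the computation deferred to Appendix~\ref{app:A}; I would quote it.

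Second step: handle the projection $W_n = W\otimes I_3$ with $W = [I_{n-3}\mid 0]$. The point of premultiplying by $\LL^{(N)}$ inside $V_{rn}$ is that $\LL^{(N)}$ annihilates exactly the eigendirections corresponding to $k=0$ (and, in $3D$, the $q=0$, i.e. axis, component of the synchronized mode), so that $\LL^{(N)}$ composed with the truncation $W_n$ realizes an isomorphism from the relevant $\CR$-eigenspaces (the modes $k\in\{1,\dots,N-1\}$) onto $\M_n^\perp$. Concretely I would argue that $V_{rn} A V_{rn}^\top$, for $A\in\CR$, has eigenvalues that are (up to the fixed positive geometric factors coming from $\|\LL^{(N)}\psi_k\|^2$, which evaluate to the $4(1-\cos(2i\pi/N))$ prefactor in~\eqref{eq:condIBS}) precisely the eigenvalues of $A$ restricted to the modes $i\in\{2,\dots,N-1\}$; the modes $i=1$ and $i=N$ are killed because they lie in $\M_n$. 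So negative definiteness of $V_{rn}(-\sum_m k_m\LL_{sm})V_{rn}^\top$ reduces to a sign condition on those mode eigenvalues, which is exactly the bracketed $\sum_m k_m(\cos(q\alpha_m) - \cos(q\alpha_m - 2im\pi/N)) > 0$.

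Third step: put the drift term back in. Since $\tfrac{\partial \f}{\partial \x}$ need not commute with the $\CR$ matrices, I cannot diagonalize it, but I can bound: for any symmetric $M$ and the fixed matrix $P := \RR_\eta\LL^{(N)}$ one has $W(PMP^\top + \text{sym})W^\top \le \lambda_{\max}(P\,\tfrac{\partial\f}{\partial\x}\,P^\top)\cdot(WPP^\top W^\top)$ in the Loewner order, and the $\LL_{sm}$ term contributes its exact (negative) eigenvalues as above. Combining, uniform negative definiteness of the whole projected Jacobian is implied by the supremum condition~\eqref{eq:condIBS}: the $\lambda_{\max}$ of the drift part must be dominated, uniformly in $(\p,t)$, by the minimum over modes $i$ and over $q$ of the cyclic-coupling eigenvalue $4(1-\cos(2i\pi/N))\sum_m k_m(\cos(q\alpha_m) - \cos(q\alpha_m - 2im\pi/N))$. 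Setting $\f\equiv 0$ kills the first term and leaves exactly the stated scalar inequality (the factor $4(1-\cos(2i\pi/N))$ being strictly positive for $i\in\{2,\dots,N-1\}$ drops out of the sign condition), giving the specialized claim.

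The main obstacle, and the step deserving the most care, is the second one: verifying that the truncation-and-projection $W_n\RR_\eta\LL^{(N)}$ really does pick out modes $i\in\{2,\dots,N-1\}$ with the correct positive weights and kills precisely the $\M_n$-directions --- i.e. that $V_{rn}$ is a faithful (injective on $\M_n^\perp$) encoding of the manifold constraint, so that no spurious zero eigenvalue of the projected Jacobian sneaks in and the "only negative semidefinite" failure seen for the naive manifold $\M$ in Section~\ref{sec:IntroToApproach} does not recur. Once that eigenvalue bookkeeping is pinned down, the rest is the commutative-algebra diagonalization of Appendix~\ref{app:A} plus a one-line Loewner-order bound for the drift.
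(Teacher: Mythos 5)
Your proposal is correct and follows essentially the same route as the paper's own proof: flow-invariance of $\M_n$, partial contraction applied to the projected Jacobian $V_{rn}\big(\tfrac{\partial \f}{\partial \x}-\sum_m k_m\LL_{sm}\big)V_{rn}^\top$, simultaneous diagonalization in the commutative algebra $\CR$ giving the eigenvalues $4\big(1-\cos(2i\pi/N)\big)\big(\cos(q\alpha_m)-\cos(q\alpha_m-2im\pi/N)\big)$ of Appendix~\ref{app:A}, and an eigenvalue-domination bound for the drift term. The mode bookkeeping you flag as the delicate step (that $W_n\RR_\eta\LL^{(N)}$ kills exactly the $\M_n$-directions and weights the modes $1<i<N$ positively) is precisely the content of Proposition~\ref{prop:EigenLap}, which the paper's proof invokes at the same point.
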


\begin{proof}
For a regular formation $V_{rn}\bar \x=0$, $V_{rn} \LL_{sm}(\bar \x,t)\bar \x=0$, then $V_{rn}( \f(\bar \x) - \sum k_m \LL_{sm}(\bar \x,t)\bar \x)= 0$, therefore $\M_n$ is an invariant manifold of the system.

From the results in Section \ref{prop:EigenLap} it is also true that $\lambda \big\{ \RR \LL \sum_m k_m(\x,t)\LL_{sm}(\x,t)  \LL^\top \RR^\top \big\} \geq \min_{{\buildrel {1<i<N}\over{q\in\{-1,0,1\}}}} \lambda_{ik}$ defined in eq. \ref{prop:EigenLap}. If $\RR \LL X \LL \RR< 0$, then $W_n \RR_\eta \LL X \LL \RR_\eta W_n^\top< 0$ for any matrix $X$.

Thus, if condition in eq. (\ref{eq:condIBS}) is satisfied, then  $V_{rn} (\frac{\partial \f(x)}{\partial \x} - \sum_m k_m(\x,t) \LL_{sm}(\x,t)_m) V_{rn}^\top < 0$ and therefore $\x$ exponentially converges to $\bar \x \in \M_n$.
\end{proof}

If $\f(\x)=0$, the above conditions in $\f(\bar \x)$ are satisfied and the global convergence to $V_{rn} \x=0$ is verified for specific combinations of $k_m(\x,t)$, $\alpha_m(\x,t)$. For the case of a 1-circulant topology $L_1$, if $k(\x,t)>0$ and $|\alpha(\x,t)| < 2\pi/N$, convergence to $\M_n$, i.e. a symmetric formation is guaranteed.

\begin{remark}
The Laplacian $\LL_{sm}(\x,t)$ is the symmetric part of the Laplacian $\LL_{m}(\x,t) = L_m\otimes R$. From the properties of positive negative matrices $VAV^\top < 0$ iff $V(A+A^\top)V^\top < 0$.

It was shown that the manifold $\M_{n}$ is an invariant manifold of the cyclic pursuit control law $R(\alpha)(\x_{[i+m]}-\x_i)$, then, verifying the conditions for convergence for the symmetric control law $\uu = \sum k_m (\x,t)(\LL_m(\x,t) + \LL_m(\x,t)^\top)\x$ is a direct proof of convergence to circular rotating formations for the directed topology $\uu = \sum k_m(\x,t) \LL_m(\x,t) \x $, with asymmetric control law $\uu_i = k_m(\x,t) R(\alpha)(\x_{[i+m]}-\x_i)$ generalizing the results for cyclic pursuit. 
\end{remark}
This last result verifies the convergence to rotating regular formations resulting for $k,\alpha$ constants, agreeing with the results obtained through linear analysis presented in \cite{Pavone2007, Ramirez2009, Ren2008a}. 

\begin{proposition}\label{prop:fixedsize}
For a ring topology ($L_1$), if $|\alpha| = \pi/N$, the formation converges to a regular polygon with a constant size
\end{proposition}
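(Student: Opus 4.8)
The plan is to isolate what is actually new in the statement. That the trajectories of the ring law~(\ref{eq:basicCP}) converge to a regular $N$-gon is already contained in the theorem above (case $\f=0$, topology $L_1$, $|\alpha|<2\pi/N$, which covers $|\alpha|=\pi/N$): the configuration converges exponentially to the manifold $\M_n=\ker V_{rn}$. So the only thing left to prove is that the \emph{size} of that limiting polygon does not drift, i.e.\ that the modal amplitude of $\x$ which sets the polygon's circumradius is a first integral of the flow. I would prove this by a direct diagonalisation and then simply combine it with the already-known convergence.

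For the diagonalisation I would use the $\CR$ structure of Section~\ref{sec:rotcirc}. Identifying each $\x_i\in\reals^2$ with a complex number $z_i$ so that $R(\alpha)$ acts as multiplication by $e^{j\alpha}$, the operator $kL_1\otimes R(\alpha)$ is simultaneously diagonalised by the Fourier basis of the circulant $L_1$ (Theorem~\ref{thrm:circProp}) together with the eigenvectors of $R(\alpha)$. Writing $\zeta_p:=\sum_i z_i\,\omega^{-ip}$ with $\omega=e^{2\pi j/N}$, the law~(\ref{eq:basicCP}) becomes the decoupled family $\dot\zeta_p = k\,e^{j\alpha}(\omega^{p}-1)\,\zeta_p$, and from $\omega^{p}-1 = 2\sin(\pi p/N)\,e^{j(\pi p/N+\pi/2)}$ the growth rate of $|\zeta_p|$ is $\mathrm{Re}\!\left[k e^{j\alpha}(\omega^p-1)\right] = -2k\sin(\pi p/N)\sin(\alpha+\pi p/N)$ --- an expression of the same form as those controlling contraction in the theorem above. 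Here $\zeta_0$ is the (stationary) centroid, and the regular-polygon shape the trajectories relax onto is carried by a single Fourier coefficient $\zeta_{p^\star}$, $p^\star\in\{1,N-1\}$ --- the one that, together with $\zeta_0$, coordinatises $\ker V_{rn}$ --- with $|\zeta_{p^\star}|$ proportional to the circumradius.

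The computation then finishes as follows. The growth rate of $\zeta_{p^\star}$ vanishes iff $\sin(\alpha+\pi p^\star/N)=0$; for the choice of $p^\star$ fixed by the orientation convention in $V_{rn}$ this is exactly $|\alpha|=\pi/N$, in which case $\dot\zeta_{p^\star}$ equals a purely imaginary multiple of $\zeta_{p^\star}$, so $|\zeta_{p^\star}(t)|\equiv|\zeta_{p^\star}(0)|$ for all $t$: the shape mode rotates at constant modulus, while every mode $p\notin\{0,p^\star\}$ still has strictly negative rate at $|\alpha|=\pi/N$ and is damped. Feeding this into the theorem's exponential convergence to $\M_n$, and using that on $\M_n$ the circumradius is a fixed multiple of the conserved $|\zeta_{p^\star}|$, we conclude that $\x$ converges to a regular $N$-gon of time-invariant size (a steadily rotating one, with angular rate $2k\sin(\pi/N)$).

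I expect the main obstacle to be the bookkeeping in the middle step --- verifying that the neutral Fourier coefficient $\zeta_{p^\star}$ is genuinely the one spanning $\ker V_{rn}$ (so that ``the undamped shape mode'' and ``the limiting $N$-gon'' coincide), and that at $|\alpha|=\pi/N$ the remaining coefficients $\zeta_p$, $p\in\{2,\dots,N-2\}$, really do have strictly negative rate. Both follow from the formula $-2k\sin(\pi p/N)\sin(\alpha+\pi p/N)$ and are in effect already packed into the hypotheses of the preceding theorem; the conceptual content is just that at this special angle the single surviving shape mode acquires a purely imaginary eigenvalue, so it turns forever at fixed amplitude rather than spiralling inward or outward.
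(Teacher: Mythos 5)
Your proof is correct, but it takes a genuinely different route from the paper's, and it is pitched at a slightly different closed loop. The paper's proof contains no spectral computation at all: it takes convergence to $\M_n$ from the preceding theorem (since $\pi/N<2\pi/N$) and then verifies in two lines, using the regular-polygon identity $\x_{i+1}-\x_i=R(-2\pi/N)(\x_i-\x_{i-1})$, that the \emph{symmetric} two-neighbor input $\uu_i=R(\pi/N)(\x_{i+1}-\x_i)+R(-\pi/N)(\x_{i-1}-\x_i)$ of eq.~(\ref{eq:controllaw}) vanishes identically on $\M_n$; regular polygons are thus equilibria of the closed loop, so once the formation has converged its size is frozen. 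You instead diagonalize the one-sided pursuit law of eq.~(\ref{eq:basicCP}) in the circulant Fourier basis and show that at $|\alpha|=\pi/N$ the shape mode has a purely imaginary eigenvalue (equal to $-2jk\sin(\pi/N)$ for $p^\star=N-1$, $\alpha=+\pi/N$) while every mode outside $\{0,p^\star\}$ is strictly damped; your rate formula and the orientation pairing ($\alpha=+\pi/N$ with $p^\star=N-1$ for the paper's $V_{rn}$ convention, $\alpha=-\pi/N$ with $p^\star=1$) check out. Your argument is self-contained (it does not even need the preceding theorem) and yields extra quantitative information — the limit polygon rotates at rate $2k\sin(\pi/N)$, and you get the transient decay rates — recovering in essence the classical linear analysis of Pavone--Frazzoli/Ren that the paper cites as corroboration. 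Two remarks. First, since the paper's proof manifestly treats the symmetric law, your proof as written establishes the proposition for the directed law only; for the symmetric law the same modal computation gives the real rates $2k\bigl(\cos(\alpha+2\pi p/N)-\cos\alpha\bigr)$, which vanish exactly for the shape mode at $\alpha=\pi/N$ (so the limiting polygon is stationary rather than rotating) and are strictly negative for $p\in\{1,\dots,N-2\}$, so the adaptation costs one line. Second, the paper's equilibrium argument extends verbatim to state- and time-varying $k_m,\alpha_m$ (the condition displayed immediately after the proposition), where a fixed Fourier diagonalization is unavailable; your spectral route buys completeness and explicit rates, the paper's buys brevity and that generality.
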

\begin{proof}
When a symmetric configuration is achieved:
\begin{eqnarray}
	\x_{i+m} - \x_{i} = R(-2\pi m/N)(\x_{i} - \x_{i-m})
\end{eqnarray}
Then
\begin{eqnarray}
\mathbf u_i &=& R(\pi/N)(\x_{i+1} - \x_i) + R(-\pi/N)(\x_{i-1} - \x_i)\nonumber \\
&=& R(\pi/N)(R(-2\pi/N)(\x_{i} - \x_{i-1})) + R(-\pi/N)(\x_{i-1} - \x_i)\nonumber \\
&=& R(-\pi/N)(\x_{i} - \x_{i-1}) + R(-\pi/N)(\x_{i-1} - \x_i) = 0.
\end{eqnarray}
\end{proof}
Similarly, for a more general cyclic topology with state and/or time dependent parameters $k$ and $\alpha$, the condition for $k_m(\bar \x,t)$, $\alpha_m(\bar \x,t)$ to achieve a regular polygon with fixed size is given by:
\begin{eqnarray}
\mathbf u_i &=& \sum_m k_m(\bar \x,t) (R(\alpha_m(\bar\x,t))(R(-2\pi m /N)(\bar\x,t_{i} - \bar\x,t_{[i-m]})) \nonumber \\ && \qquad\qquad\qquad\qquad\qquad\qquad + R(-\alpha_m(\bar\x,t))(\bar\x,t_{[i-m]} - \bar\x,t_i) ) = 0\nonumber \\
&=& \sum_m k_m(\bar\x,t) (R(\alpha_m(\bar\x,t) - 2\pi m /N) - R(-\alpha_m(\bar\x,t))) = 0
\end{eqnarray}

\subsection{Distributed global convergence to a desired formation size}\label{sec:GlobalSize}

The previous section addresses the problem of converging to a formation under a general cyclic interconnection. However, the subspace to which convergence is defined allows for the size of the formation to be an uncontrolled state of the system.
In general the problem of global convergence to a {splay-state} formation using only neighbor information has been sought after in the literature. As mentioned in the introduction, this article considers the approach to a formation without constraining the relative states to be a specified vector.

One approach that converges to formations without specifying fixed relative vectors in a global frame consists of using structural potential functions \cite{Olfati-saber2002}. When using only relative magnitude information, if the interconnection is not a rigid graph, the global convergence to the desired formation is impossible due to the ambiguity of the possible equilibrium configurations.  
 An extra piece of information, allows the approach discussed in this paper achieving global convergence results, namely, the agreement on an orientation which in the case of spacecraft flight can be achieved by individual star trackers.
In the case of the cyclic pursuit, an approach similar to the one presented in previous work \cite{Ramirez2009}, where the angle of the formation is defined as a dynamic variable that depends on the relative distance to the neighbor seems a reasonable approach however, the stability results are only local.

In this section, using an extension to the approach in the previous section we present a distributed control law for which \emph{global} convergence to the desired size can be guaranteed. Specifically, a sufficient condition in the magnitude of an arbitrary function is determined to guarantee convergence to the desired formation from any initial conditions.

The overall structure of the proof consists of first showing that a
sufficient condition on the bounds of an arbitrary odd function $f(x)$
guarantees convergence to a symmetric formation, i.e. convergence to
the invariant manifold $\M_n$. And then, to show that the trajectories
within that manifold lead to a formation of the desired size.

\begin{theorem}{\bf Global convergence to a regular formation of a desired size} \label{thm:GlobalSize} \\
Consider a set of agents with first order dynamics $\dot \x_i = \uu_i$, interconnected under an undirected cyclic topology with control law:
\begin{eqnarray}\label{eq:uGlobalSize}
u_i = R_{\pi/N}(\x_{i+1} - \x_i) + R^\top_{\pi/N}(\x_{i-1} - \x_{i}) + f(|\x_{i+1} - \x_i| - \rho)(\x_{i+1} - \x_i)
\end{eqnarray}
where $f(z)$ is an arbitrary bounded odd function of $z$, such that $zf(z)>0$ for $z \neq 0$  and $f(0)=0$. The overall dynamics can be written as:
\begin{eqnarray}\label{eq:GlobalSizeSystem}
\dot \x = (-\LL_s + G(\x))\x
\end{eqnarray}
Global convergence to $\M_{n\rho}$, the manifold of regular formations with intervehicle distance $\rho$, $\bar \x \in \M_{n\rho} = \{\bar \x : V_{rn}\x =0  , |\x_i-\x_j |=\rho\ \forall i,j \}$ is guaranteed if:
\begin{eqnarray}\label{eq:GlobCondition}
\lambda_{min}(\bar V_{rn} \LL_s (\bar V_{rn})^\top )  > Nf_{max} \lambda_{max}(\bar V_{rn} A_1 (\bar V_{rn})^\top)
\end{eqnarray}
where $\bar V_{rn}$ is the matrix of orthonormal bases for $V_{rn}$, $\LL_s$ is the symmetric circulant rotational Laplacian, $A_{1}$ is the matrix $A_1 = \begin{bmatrix} -I& I& 0 & \ldots \\ 0& 0& 0 & \ldots \\ \vdots &\vdots &\vdots &\vdots \end{bmatrix}$  and $f(x) < f_{max}$.
\end{theorem}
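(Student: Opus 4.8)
The plan is to follow the two-stage structure announced above: \emph{(i)} prove that every trajectory converges exponentially to the regular-polygon manifold $\M_n$ (shape acquisition), and \emph{(ii)} analyse the flow restricted to $\M_n$ and show the common inter-vehicle distance tends to $\rho$ (size acquisition); the two facts are then combined, using that $\M_{n\rho}\subset\M_n$ is a set of equilibria. Before anything else I would record invariance: writing $d_i=\x_{i+1}-\x_i$, $z_i=|d_i|-\rho$, the gain term $G(\x)\x$ is the stacked vector with $i$-th block $f(z_i)\,d_i$, so $G(\x)=\sum_{i=1}^{N} f(z_i)\,\tilde A_i$, where $\tilde A_i$ is the block-circulant shift of $A_1$ that extracts $d_i$ into block $i$. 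Invariance of $\M_n$ for the $-\LL_s$ part is eq.~(\ref{eq:MisInvManifold}); for the $G$ part the same edge-rotation identity applies because all $z_i$ coincide on $\M_n$; hence $V_{rn}(-\LL_s+G(\bar\x))\bar\x=0$ on $\M_n$, and on $\M_{n\rho}$ this vector is simply $0$ since $z_i\equiv0$ there.

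For stage~(i) I would invoke partial contraction (Theorem~\ref{thm:PartialContraction}) with the orthonormal projector $\bar V_{rn}$ onto $\M_n^\perp$: it suffices that the projected Jacobian be uniformly negative definite. Split $\frac{\partial}{\partial\x}\big[(-\LL_s+G(\x))\x\big]=-\LL_s+\frac{\partial(G(\x)\x)}{\partial\x}$. The first piece contributes the symmetric block $\bar V_{rn}(-\LL_s)\bar V_{rn}^\top$, whose largest eigenvalue is $-\lambda_{min}(\bar V_{rn}\LL_s\bar V_{rn}^\top)$. The second piece is a state-dependent perturbation that is uniformly bounded: since $|f|<f_{max}$ and each $\tilde A_i$ is a cyclic block-shift of $A_1$, the block circulant-rotational structure of $V_{rn}$ (Section~\ref{sec:rotcirc}) gives $\big\|\bar V_{rn}\,\tfrac{\partial(G(\x)\x)}{\partial\x}\,\bar V_{rn}^\top\big\|\le N f_{max}\,\lambda_{max}(\bar V_{rn}A_1\bar V_{rn}^\top)$ uniformly in $\x$. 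By Weyl's inequality, condition~(\ref{eq:GlobCondition}) then makes the projected Jacobian uniformly negative definite, so the associated virtual system contracts and $V_{rn}\x\to0$ exponentially.

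For stage~(ii), on $\M_n$ the cyclic-pursuit term vanishes identically -- this is the computation in the proof of Proposition~\ref{prop:fixedsize}, which applies here because $|\alpha|=\pi/N$ -- leaving the reduced dynamics $\dot\x_i=f(r-\rho)\,d_i$, with $r=|d_i|$ the common edge length. Then $\dot d_i=f(r-\rho)(d_{i+1}-d_i)=f(r-\rho)(R_{-2\pi/N}-I)d_i$, hence $\tfrac{d}{dt}r^2=2f(r-\rho)\big(\cos(2\pi/N)-1\big)r^2$. Since $\cos(2\pi/N)-1<0$ and $zf(z)>0$ for $z\ne0$, this scalar ODE has $r=\rho$ as its unique globally attractive equilibrium for $r>0$; moreover $\sum_i\dot\x_i=0$ fixes the centroid and keeps the flow in $\M_n$. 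Hence every trajectory on $\M_n$ converges to $\M_{n\rho}$, and together with stage~(i) -- exponential convergence to $\M_n$, plus the scalar size-error equation being robustly stable with respect to the exponentially vanishing off-manifold component -- this yields global convergence to $\M_{n\rho}$.

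I expect stage~(i), and specifically the bound on $\frac{\partial(G(\x)\x)}{\partial\x}$, to be the main obstacle. Differentiating the state-dependent gain produces, besides the benign $f(z_i)I$ blocks, rank-one corrections $\propto f'(z_i)\,d_id_i^\top/|d_i|$ that are not obviously dominated by $f_{max}$ alone, so making the estimate $N f_{max}\lambda_{max}(\bar V_{rn}A_1\bar V_{rn}^\top)$ rigorous requires either an additional hypothesis bounding the relevant derivative combination, an incremental (non-smooth) contraction estimate that exploits the monotonicity $zf(z)>0$ directly instead of the exact Jacobian, or the observation that only the $f(z_i)I$ part matters to the order allowed by the slack in~(\ref{eq:GlobCondition}). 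A secondary technical point is to verify that all the shifts $\tilde A_i$ obey the same projected bound, which is where the commutative algebra $\CR$ of Section~\ref{sec:rotcirc} enters.
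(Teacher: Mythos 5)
Your two-stage skeleton and your stage (ii) essentially coincide with the paper's proof, but stage (i) as you set it up has a genuine gap, and it is exactly the one you flag at the end: you differentiate the full closed-loop field and therefore must bound $\bar V_{rn}\,\frac{\partial (G(\x)\x)}{\partial \x}\,\bar V_{rn}^\top$, and the rank-one terms proportional to $f'(z_i)\,d_i d_i^\top/|d_i|$ are not controlled by $f_{max}$ (boundedness of $f$ says nothing about $f'$), so condition (\ref{eq:GlobCondition}) does not imply your projected-Jacobian estimate and the theorem as stated is not reached by that route. The paper never differentiates through $G$: it applies partial contraction to the auxiliary (virtual) system $\dot\y=(-\LL_s+G(\x(t)))\y$, of which the actual trajectory is a particular solution and for which $\M_n$ is flow-invariant (its first lemma: on $\M_n$ all the $z_i$ coincide, so $G(\bar\x)$ collapses to a scalar multiple of $L_1\otimes I_3$, annihilated by $V_{rn}$). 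The Jacobian of that system in $\y$ is just $-\LL_s+G(\x(t))$, so only $\lambda_{max}(\bar V_{rn}G(\x)\bar V_{rn}^\top)$ needs bounding; writing $G=\sum_i f(z_i)A_i$ and using the paper's second lemma --- the projected spectra of all the cyclic shifts $A_{is}$ coincide, via similarity by the block shift matrices $T_{ik}$, which is your ``secondary technical point'' --- yields the bound $N f_{max}\lambda_{max}(\bar V_{rn}A_1\bar V_{rn}^\top)$, and (\ref{eq:GlobCondition}) suffices. Your closing guess that ``only the $f(z_i)I$ part matters'' is the right instinct, but the rigorous vehicle is this virtual-system device, not a Weyl estimate on the true Jacobian of the original vector field.

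Stage (ii) matches the paper: on $\M_n$ the $\pi/N$-coupling cancels (the computation of Proposition \ref{prop:fixedsize}), leaving a scalar edge-length equation of the form $\dot z\propto -f(z)(z+\rho)$; the paper closes with the Lyapunov function $V=\tfrac12 z^2$ and LaSalle, while you argue directly on the scalar ODE, and your coefficient $\cos(2\pi/N)-1<0$ is in fact the correct one (the paper writes $-\cos(\pi/N)$, same sign, so the conclusion is unaffected). Your remark about gluing the two stages --- robustness of the size dynamics to the exponentially vanishing off-manifold component --- concerns a step the paper itself glosses over, so it is not where your proposal falls short; the missing ingredient is the virtual-system treatment of $G$ in stage (i).
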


\begin{proof}
To start, consider the following two lemmas:
\begin{lemma}
The manifold $\M_n$ is a flow-invariant manifold of the dynamics in eq. (\ref{eq:GlobalSizeSystem}) 
\end{lemma}
\begin{proof}
It has been shown that $\M_n$ is an invariant manifold of  $\LL_s$, namely $V_{rn}\LL_s\bar \x = 0$ for $\bar \x$ such that $V_{rn} \bar \x = 0$. For a regular formation $|\x_{k+1}-\x_{k}| = |\x_{i+1}-\x_{i}|$ then $f(|\bar\x_{k+1}-\bar\x_{k}| -\rho)=f( |\bar\x_{i+1}-\bar\x_{i}|-\rho)$, and thus without any loss of generality $G(\bar \x) = (f(| \bar \x_{2}- \bar \x_{1}|) L_1\otimes I_3) \bar \x$.  

Then, $V_{rn}(-\LL_s+ G(\bar \x))\bar \x =  f(|\bar \x_{2}- \bar \x_{1}|) V_{rn}( L_1\otimes I_3) \bar \x = 0$. 
\end{proof}
\begin{lemma}
$\eig(V_{rn}A_{is}\bar V_{rn}^\top) = \eig(\bar V_{rn}A_{ks}\bar V_{rn}^\top)$ for all $i, k \in \{ 1,\ldots,N \}$, where $A_{is} = (A_i+A_i^\top)$.
$A_i = a^{(i)}\otimes I_3$, $a^{(i)}=\{a^{(i)}_{i,j}\} \in \reals^{N\times N}$ is a matrix of zeros except the elements $a^{(i)}_{i,i} = - 1$, $a^{(i)}_{i,i+1} = 1$. $A_1$ was explicitly described above.
\end{lemma}
\begin{proof} 
$\eig(\bar V_{rn}A_{is}\bar V_{rn}^\top) = \eig(\bar V_{rn}A_{ks}\bar V_{rn}^\top)$ if and only if there exists a similarity transformation such that:
\begin{equation}T \bar V_{rn}A_{is}\bar V_{rn}^\top = \bar V_{rn}A_{ks}\bar V_{rn}^\top T\end{equation}
Notice that $T_{ik}A_i = A_kT_{ki}$ where $T_{ik} = T^\top_{ki} = ( I_{n} - L_{k-i})\otimes I_2$, for example, in the case of $i=1, k=2$:
\begin{eqnarray} 
T_{12}A_1 &=& A_2 T^\top_{12}\\
\begin{bmatrix} 0& I& 0 & \ldots \\ 0& 0&I & \ldots \\ \vdots &\vdots &\vdots &\vdots \\ I& 0 & \ldots & 0 \end{bmatrix}\begin{bmatrix} -I& I& 0 & \ldots \\ 0& 0& 0 & \ldots \\ 0& 0& 0 & \ldots \\ \vdots &\vdots &\vdots &\vdots \end{bmatrix}& =& \begin{bmatrix}  0& 0& 0 & \ldots \\ 0 &-I& I& \ldots \\ 0& 0& 0 & \ldots \\ \vdots &\vdots &\vdots &\vdots \end{bmatrix} \begin{bmatrix} 0& 0& 0 & \ldots & I \\ I& 0& 0 & \ldots \\ 0& I& 0 & \ldots \\ \vdots &\vdots &\vdots &\vdots \end{bmatrix}
\end{eqnarray} 
and correspondingly $T_{ik}A_{is} = A_{ks}T_{ki}$.
Then we have that $\bar V_{rn} \bar V_{rn}^\top = I $, therefore by defining $T = \bar V_{rn}T\bar V_{rn}^\top$:
\begin{eqnarray}
T \bar V_{rn}A_{is}\bar V_{rn}^\top = \bar V_{rn}T\bar V_{rn}^\top \bar V_{rn}A_{is}\bar V_{rn}^\top
	 = \bar V_{rn}T A_{is}\bar V_{rn}^\top 
\end{eqnarray}
and
\begin{eqnarray}
\bar V_{rn}A_{ks}\bar V_{rn}^\top T = \bar V_{rn}A_{ks}\bar V_{rn}^\top \bar V_{rn}T\bar V_{rn}^\top 
	 = \bar V_{rn}A_{ks}T\bar V_{rn}^\top;
\end{eqnarray}
showing the desired equivalence.
\end{proof}
Based on the above two lemmas, we can then show that condition (\ref{eq:GlobCondition}) guarantees convergence to the invariant manifold $\M_n$. Specifically, invoking the results in \cite{Wang2005} and \cite{Pham2007} introduced in Section \ref{sec:IntroToApproach} the convergence to the manifold $\M_n:\{\bar \x: \bar V_{rn} \bar \x\}$ is guaranteed if the projected Laplacian of the auxiliary system $\dot \y = (-\LL_s + G(\x) )\y $ is negative definite, i.e.
\begin{eqnarray}\label{eq:ProjLapGlob}
\bar V_{rn}(-\LL_s + G(\x)) \bar V_{rn}^\top < 0
\end{eqnarray}
This can be guaranteed if
\begin{eqnarray}
\lambda_{min}(\bar V_{rn} \LL_s \bar V_{rn}^\top )  &>&  \lambda_{max}(\bar V_{rn} G(\x) \bar V_{rn}^\top)
\end{eqnarray}
Since
\begin{eqnarray}
\lambda_{max}(\bar V_{rn} G(\x) (\bar V_{rn})^\top) &=&  \lambda_{max}(\bar V_{rn} \sum_{i=1}^N f(\x_{i+1}-\x_{i})A_i(\x) (\bar V_{rn})^\top)\\
&<&  N f_{max} \lambda_{max}(\bar V_{rn} A_i(\x) (\bar V_{rn})^\top)
\end{eqnarray}
eq. (\ref{eq:GlobCondition}) guarantees eq. (\ref{eq:ProjLapGlob}) and thus, convergence to $\M_n$.

Now, it is shown that convergence to $\M_n$ implies convergence to $\M_{n\rho}$.  Having that $\LL \bar \x = 0$ then, the dynamics of the distance between any two vehicles in $\M_n$ are given by:
\begin{eqnarray}
\frac{d}{dt}|\x_{i+1} - \x_i| &=& \frac{(\x_{i+1} - \x_{i})^\top}{|\x_{i+1} - \x_i|} (\dot \x_{i+1} - \dot \x_{i})\nonumber \\
&=& \frac{(\x_{i+1} - \x_{i})^\top}{|\x_{i+1} - \x_i|} f(|\x_{i+2} - \x_{i+1}| - \rho)(\x_{i+2} - \x_{i+1}) - f(|\x_{i+1} - \x_{i}|-\rho)(\x_{i+1} - \x_{i})\nonumber\\
&=& \frac{(\x_{i+1} - \x_{i})^\top}{|\x_{i+1} - \x_i|}  ( f(|\x_{i+1} - \x_{i}| - \rho) R_{2\pi/N}(\x_{i+1} - \x_{i}) - f(|\x_{i+1} - \x_{i}| - \rho) (\x_{i+1} - \x_{i})\nonumber\\
&=& \frac{(\x_{i+1} - \x_{i})^\top}{|\x_{i+1} - \x_i|} (f(|\x_{i+1} - \x_{i}| - \rho) (R_{2\pi/N} - I_3) (\x_{i+1} - \x_{i})\nonumber\\
&=& -\cos(\pi/N) f(|\x_{i+1} - \x_{i}| - \rho) |\x_{i+1} - \x_i|
\end{eqnarray}
defining $z=|\x_{i+1} - \x_{i}| - \rho$: 
\begin{eqnarray}
\dot z &=&  -\cos(\pi/N)f(z)(z+\rho).
\end{eqnarray}
A Lyapunov function candidate for this system is $V=\frac{1}{2}z^2$, yielding
\begin{eqnarray}
\dot V &=&  -\cos(\pi/N)zf(z)(z+\rho)
\end{eqnarray}
Since $f(z)$ is an odd function, $zf(z) > 0$ for $z \neq
0$. Furthermore, $z+\rho = |\x_{i+1} - \x_{i}| > 0$, so that $\ \dot V
\leq 0$. Using Lasalle's Invariant Set Theorem~\cite{Slotine_Li}, the
system (\ref{eq:GlobalSizeSystem}) globally converges to the largest
invariant set where $\dot V = 0$, namely $\M_{n\rho}$.
\end{proof}
Note that the global guarantee in this control approach is defined by an upper bound on the arbitrary function $f(z)$. This bound is easily implementable by a saturation function or an arctangent function.

The two results presented in this section generalize results of cyclic
control approach to nonlinear systems. Specifically, an analysis approach that
achieves \emph{global} guarantees for a generalized version of
cyclic pursuit is introduced,  which includes time-varying and state-dependent gains and
coupling matrices as well as more general cyclic
interconnections. Additionally, we introduce a decentralized control
approach with \emph{global} convergence guarantees to a regular
formation of specified size, described by upper bounds on a function that defines the
separation between vehicles.

\subsection{Extension to  second order systems} \label{subsec:secondOrder}

In the derivation of the results we define the systems as first order systems. 
In this section an approach based on a sliding mode control is presented, which shows an straighforward extension of the first order integrators to more complex dynamics. 

Let $\x_i(t)=$ $[x_{i,1}(t),x_{i,2}(t),x_{i,3}(t)]^{\top}\in
\mathbb{R}^3$ be the position at time $t\geq 0$ of the $i$th
agent, $i\in\{1,2,\ldots,n\}$, and let $\x=[\x_1^\top ,\x_2^\top,\ldots,\x_n^\top ]^\top $. 

Consider a linear first order system:
\begin{equation}
\dot \x_i = \sum A_{ij} (\x_{j} - \x_{i})
\end{equation}
shown to converge to manifold $\M_1$.

If the dynamics of each agent are now described by a second order model:
\begin{eqnarray}\label{eq:double_int}
\dot \x_i &=&  \vv_i \nonumber \\
\dot \vv_i &=&  f(\x_i, \vv_i) + \uu _i
\end{eqnarray}
consider the feedback control law:
\begin{equation}\label{eq:dynA}
\begin{split}
\mathbf \uu _i = -f(\x_i, \vv_i) + k_d \,\sum A_{ij} (\x_{j} - \x_{i}) &+ \sum A_{ij} (\vv_{j} - \vv_{i}) - k_d\,\vv_i, \quad k_d\in \reals_{>0}.
\end{split}
\end{equation}
A useful form to describe the second order system is by using the sliding variables:
\begin{eqnarray}
\s_i &=& k_d \x_i + \vv_i
\end{eqnarray}
Then, the dynamics of the overall system (\ref{eq:double_int}) with control law (\ref{eq:dynA}) can be writen as:
\begin{eqnarray*}
\ddot \x_i &=& \sum A_{ij} (k_d(\x_{j} - \x_{i}) + (\vv_{j} - \vv_{i})) - k_d \vv_i \\
          &=& \sum A_{ij} ((k_d \x_{j} + \vv_{j}) - (k_d \x_{i}+ \vv_{i}) ) - k_d \vv_i 
\end{eqnarray*}
Then, the system dynamics are described by the equations:
\begin{eqnarray}
\dot \s_i &=& \sum A_{ij}( \s_{j} - \s_{i} ) \label{eq:sCP} \\
\dot \x_i &=& - k_d \x_i + \s_i \label{eq:1ordFilt}
\end{eqnarray}
The first equation describes a first order system for which global convergence to a manifold $\M \subset R^3$ can be analysed following the approach in the previous sections, the second equation is an stable first order filter with input $\s_i$ and output $\x_i$. Then, the trajectories of the agents under control law (\ref{eq:dynA}) are the filtered response of trajectories of the first order system.

Under the control law in eq. (\ref{eq:dynA}), the physical trajectories converge to trajectories which are just the response of the filter $ G_{k_d}(j\omega) = \frac{1}{1 + jk_d\omega}$ to the trajectories of the first order system with initial conditions $\s(0) = k_d \x(0) + \dot \x(0)$.


\section{Convergence Primitives}\label{sec:FormPrim}
An approach to formation control based on combination of primitives is described in this section. This approach is an extension of theorem \ref{thm:PartialContraction}.
Using the idea of primitives, controllers that converge to more complex subspaces can be designed and their global convergence properties verified.

\begin{theorem}\label{thm:multimanifold}
Consider the system:
\begin{eqnarray*}
	\dot \x = \sum_{i}{\f_i(\x)}
	\end{eqnarray*}
such that:
\begin{eqnarray*}
   V_i:\reals^{n} \rightarrow \reals^{p_i}, \quad V_i \f_i(\bar \x) = 0,  \quad V_i \bar \x = 0 , \quad  \forall\ \bar \x \in \M_i
\end{eqnarray*} 
with $rank(V_i)=p_i$.

Then, if either:
\begin{eqnarray}\label{eq:cond1}
i.)\quad\quad\quad\quad    \sum_{i}V{\frac{\partial}{\partial x} \f_i(\x)}V^\top &<& 0  
\end{eqnarray}
or,
\begin{eqnarray}\label{eq:MultiJac}
ii.)\quad\quad\quad\quad  \begin{pmatrix} V_1\frac{d f_1}{d \x}V_1^\top & V_1\frac{d f_2}{d \x}V_2^\top & \ldots & V_1\frac{d f_n}{d \x}V_n^\top \\ V_2\frac{d f_1}{d \x}V_1^\top & V_2\frac{d f_2}{d \x}V_2^\top & \ldots & V_2\frac{d f_n}{d \x}V_n^\top \\ \vdots & \vdots & \vdots & \vdots \\ V_n\frac{d f_1}{d \x}V_1^\top & V_n\frac{d f_2}{d \x}V_2^\top & \ldots & V_n\frac{d f_n}{d \x}V_n^\top \end{pmatrix}<0
\end{eqnarray}
where $span\{V^\top\} = span\{ [V_1^\top V_2^\top ... V_n^\top]\}$. Then:
\begin{eqnarray*}
   \x \in \bigcap_{i}\M_i \quad as \quad t\rightarrow \infty
\end{eqnarray*}
\end{theorem}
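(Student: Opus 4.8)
The plan is to realize the common target set as the null space of a single projection and then invoke the partial contraction theorem (Theorem~\ref{thm:PartialContraction}). Set $\M \doteq \bigcap_i \M_i$. Since $span\{V^\top\} = span\{[V_1^\top\; V_2^\top\; \cdots\; V_n^\top]\}$, we have $V\bar\x = 0$ exactly when $V_i\bar\x = 0$ for every $i$, i.e. $\M = \{\bar\x : V\bar\x = 0\}$; after normalization we may take the rows of $V$ orthonormal, so $V$ is the orthonormal projector onto $\M^\perp$. First I would check that $\M$ is flow-invariant for $\dot\x = \sum_i\f_i(\x)$: each hypothesis $V_i\f_i(\bar\x) = 0$ on $\M_i$ says $\f_i$ is tangent to $\M_i$, and on the smaller set $\M$ the primitives remain tangent to $\M$ (in the applications of interest each primitive $\f_i$ actually vanishes on $\M_i$, hence on $\M$), so $\sum_i\f_i(\bar\x)$ is tangent to $\M$ and $V\sum_i\f_i(\bar\x) = 0$ there.

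For case (i), Theorem~\ref{thm:PartialContraction} reduces the claim to contraction of the auxiliary system $\dot\y = V\f(V^\top\y,t) = \sum_i V\f_i(V^\top\y,t)$, whose Jacobian in $\y$ is $\sum_i V\frac{\partial\f_i}{\partial\x}(V^\top\y,t)V^\top$. If this sum is uniformly negative definite — which is precisely condition~(\ref{eq:cond1}) — then the auxiliary system is contracting with the identity metric $\Theta = I$, and Theorem~\ref{thm:PartialContraction} yields exponential convergence of every trajectory of $\dot\x = \sum_i\f_i(\x)$ to $\M = \bigcap_i\M_i$.

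For case (ii), I would instead work in the stacked error coordinates $\y_i \doteq V_i\x$, $i = 1,\dots,n$. Because each primitive is driven by its own error coordinate (so that $\partial\f_j/\partial\x$ carries the factor $V_j$ on the right, consistent with $V_j$ being the projection attached to $\f_j$), the closed-loop equations for the stack $\bar\y = (\y_1^\top,\dots,\y_n^\top)^\top$ close up, with $\dot\y_i = \sum_j V_i\f_j$, and the Jacobian of this stacked virtual system in $\bar\y$ is exactly the block matrix displayed in~(\ref{eq:MultiJac}), its $(i,j)$ block being $V_i\frac{\partial\f_j}{\partial\x}V_j^\top$. Uniform negative definiteness of that block matrix then makes the stacked system contracting (again with $\Theta = I$), so all of its trajectories converge to a single one; the zero trajectory $\bar\y\equiv 0$ is an equilibrium (each $\f_j$ being tangent to $\M_j$), hence $\y_i\to 0$ for all $i$, i.e. $\x\to\bigcap_i\M_i$. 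The span condition $span\{V^\top\} = span\{[V_1^\top\,\cdots\,V_n^\top]\}$ is what allows one to pass between the condensed condition~(i) and the block condition~(ii): (ii) is usually easier to verify block by block, while (i) is the tighter statement.

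The main obstacle I anticipate is not an inequality estimate but the bookkeeping around the virtual systems: (a) justifying flow-invariance of $\M$ under the combined dynamics, since tangency of each $\f_i$ to its own $\M_i$ does not by itself force $\sum_i\f_i$ to be tangent to the intersection — one must lean on the extra structure of the primitives (their vanishing, or tangency, on $\M$); and (b) in case (ii), showing that the stacked coordinates $(\y_1,\dots,\y_n)$ genuinely form a well-posed closed virtual system whose Jacobian is precisely the matrix in~(\ref{eq:MultiJac}), which requires each primitive to depend on the state only through its associated projection $V_j\x$ and the $V_i$ to be normalized with orthonormal rows. Once these points are settled, both parts follow immediately from contraction and partial contraction analysis with the identity metric.
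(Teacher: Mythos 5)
Your proposal is correct and follows essentially the same route as the paper: condition (i) is handled by projecting with a single full-row-rank $V$ whose null space is $\bigcap_i\M_i$ and invoking (partial) contraction of $\dot\y = V\dot\x$, and condition (ii) by forming the stacked virtual system in the coordinates $\y_i = V_i\x$ whose Jacobian is precisely the block matrix in (\ref{eq:MultiJac}), with normalization handled through invertible transformations $V_i = T_i\bar V_i$ rather than assuming orthonormal rows. Concerning your anticipated obstacle (b): the paper does not require each $\f_j$ to depend on the state only through $V_j\x$ — it writes the virtual dynamics as $\dot\y_i = \sum_k\bar V_i\f_k(\bar V_k^\top\y_k + U_k^\top U_k\x)$ with $\bar V_k^\top\bar V_k + U_k^\top U_k = I_n$, treating the component $U_k^\top U_k\x$ along the actual trajectory as an exogenous input, which produces the block Jacobian directly while keeping the true trajectory and $\y=0$ as particular solutions.
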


\begin{proof} For the first condition consider $\bar \x$, a particular solution such that $ 0 = V_i \bar\x = V_j\bar\x = \ldots=V_n \bar\x$. 
Since $V$ is full row rank, there exists a linear transformation $V=T \bar V$ where $\bar V$ is an orthonormal partition of $\reals^n$ and $TVAV^\top T^\top < 0 \Leftrightarrow VAV^\top < 0$. Then, if condition (\ref{eq:cond1}) holds, the system $\dot \y = V \dot \x$ is contracting with respect to $\y$ and any trajectory of the system $\bf y$ converges to the same solution, namely $\x \rightarrow \bar \x \in \bigcap_{i}\mathcal{N}(V_i) = \bigcap_{i} \M_i$. 

For the second sufficient condition, consider the auxiliary system:
\begin{eqnarray}
 \y = \begin{pmatrix} \y_1 \\ \y_2 \\ \vdots \\ \y_n \end{pmatrix} = \begin{pmatrix} V_1 \\V_2 \\ \vdots \\ V_n \end{pmatrix} \x 
\end{eqnarray}
then:
\begin{eqnarray}
	\dot \y_i = \sum_{k}{\bar V_i \f_k (\bar V_k^{\top} \y_k + U_k^\top U_k\x)}
\end{eqnarray}
where $\bar V_i$ are orthonormal projections of the state such that $\bar V_i\bar \bar V_i^{\top} = I_{p_i}$, and $\bar V_i^{\top} \bar V_i + U_i^\top U_i = I_n$. The Jacobian of the system $\dot \y$ with respect to $\y$ is:
\[\begin{pmatrix} \bar V_1\frac{d f_1}{d \x}\bar V_1^{ \top} & \bar V_1\frac{d f_2}{d \x}\bar V_2^{ \top} & \ldots & \bar V_1\frac{d f_n}{d \x}\bar V_n^{ \top} \\ \bar V_2\frac{d f_1}{d \x}\bar V_1^{ \top} & \bar V_2\frac{d f_2}{d \x}\bar V_2^{ \top} & \ldots & \bar V_2\frac{d f_n}{d \x}\bar V_n^{ \top} \\ \vdots & \vdots & \vdots & \vdots \\ \bar V_n\frac{d f_1}{d \x}\bar V_1^{ \top} & \bar V_n\frac{d f_2}{d \x}\bar V_2^{ \top} & \ldots & \bar V_n\frac{d f_n}{d \x}\bar V_n^{ \top} \end{pmatrix}\]
Since $rank(V_i) = p_i$,  $V_i = T_i \bar V_i$ where $T_i$ is an invertible matrix, it can be writen as:
\[ diag[T_i] \begin{pmatrix} V_1\frac{d f_1}{d \x}V_1^\top & V_1\frac{d f_2}{d \x}V_2^\top & \ldots & V_1\frac{d f_n}{d \x}V_n^\top \\ V_2\frac{d f_1}{d \x}V_1^\top & V_2\frac{d f_2}{d \x}V_2^\top & \ldots & V_2\frac{d f_n}{d \x}V_n^\top \\ \vdots & \vdots & \vdots & \vdots \\ V_n\frac{d f_1}{d \x}V_1^\top & V_n\frac{d f_2}{d \x}V_2^\top & \ldots & V_n\frac{d f_n}{d \x}V_n^\top \end{pmatrix} diag[T_i^\top] \]
and $\diag[T_i]$ is an invertible transformation,  such that $T V A V^\top T^\top<0$ if and only if $V^\top A V^\top < 0$. 
Then, the negative definiteness in eq. (\ref{eq:MultiJac}), proves the global convergence to $y=0$, i.e $ \x \rightarrow \bar \x \in \bigcap_{i}\mathcal{N}(V_i) = \bigcap_{i} \M_i$.
\end{proof}

Notice that additionally, if $V \frac{\partial}{\partial \x} \f_i(\x)V^\top \leq 0$ then $   V \sum_{i}{\frac{\partial}{\partial \x} \f_i(\x)}V^\top$ is at least semidefinite negative. If one of the summation terms is positive definite or if the summation is full rank, it is positive definite.

\begin{corollary}\label{cor:cor1}
Consider a set of $N$ agents with dynamics $\dot \x_i=\uu_i$ grouped in sets $\g_s$ $s \in \{m,n,nm\}$, and a set of control laws $\uu_m=f_{m}(\x),  \uu_n=f_{n}(\x), \uu_{mn}= f_{mn}(\x)$ corresponding to each group where $f_s(\x)$ depends only on elements of set $\g_s$ and has corresponding invariant manifolds $\M_s$ corresponding to the set of transformations $V_s$, for which $V_s f_s = 0$.
Control law $f_m$ interconnect agents in set $S_m \in S$, control law $f_n$ interconnect agents in the set $S_n \in S$, $S_m \cap S_n = \emptyset$, and control law $f_{nm}$ interconnect agents in set $S_n$ and $S_m$.

If $f_s$,  individually converge to their invariant subspaces $\M_s$, i.e. $V_s\frac{df_s}{d \x}V_s^\top < 0$ for all $i$, then, a sufficient condition for the global convergence of the system $\dot \x = \sum_s f_s(\x)$ to the subspace $\M = \bigcap_i \M_i$ is:
\[ \begin{pmatrix} V_k\frac{d f_k}{d \x}V_k^\top & V_l \big(\frac{d f_{kl}}{d \x} + \frac{d f_l}{d \x}^\top \big)V_{kl}^\top \\ V_{kl} \big(\frac{d f_l}{d \x} + \frac{d f_{kl}}{d \x}^\top \big) V_{l}^\top & V_l \frac{d f_l}{d \x} V_l^\top \end{pmatrix} > 0 \]
for all $k,l \in s$. This result can be extended to more than two sets of disjoint groups $n$,$m$ with interconnecting links ${nm}$.
\end{corollary}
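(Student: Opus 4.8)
The plan is to obtain the statement as a direct specialization of Theorem~\ref{thm:multimanifold}, condition (ii), to the block structure forced by the disjoint agent groups. Partition the overall state by groups, writing $\x = (\x^{(m)},\x^{(n)})$ where $\x^{(m)}$ collects the coordinates of the agents in $S_m$ and $\x^{(n)}$ those in $S_n$ (the label $nm$ marking the inter-group coupling links rather than a third independent group of agents). Because $f_m$ is constructed from, and actuates, only agents of $S_m$, its Jacobian $\frac{d f_m}{d\x}$ is supported on the $(m,m)$ block; likewise $\frac{d f_n}{d\x}$ lives in the $(n,n)$ block, while $\frac{d f_{mn}}{d\x}$ is supported on the off-diagonal $(m,n)$ and $(n,m)$ blocks (plus possible self-terms). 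Dually, $V_m$ and $V_n$ act only on the $m$- and $n$-coordinates, so the rows of $\bar V_m$ and $\bar V_n$ are mutually orthogonal and $[\bar V_m^\top \,|\, \bar V_n^\top]$ is itself an orthonormal frame; this is what lets the individual conditions $V_s \frac{d f_s}{d\x} V_s^\top$ slot into the diagonal of the combined matrix without interference.

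The first step is to write out the block Jacobian of the auxiliary system of Theorem~\ref{thm:multimanifold}(ii), one block row and column per primitive $f_m, f_n, f_{mn}$, with $(i,j)$ block $V_i \frac{d f_j}{d\x} V_j^\top$. The support facts immediately kill the purely internal cross terms: $V_m \frac{d f_n}{d\x} V_n^\top = 0 = V_n \frac{d f_m}{d\x} V_m^\top$, since $\frac{d f_n}{d\x}$ has no rows in the coordinates $V_m$ sees (and symmetrically). The only surviving couplings are those tying a group's constraint map to the coupling term, i.e.\ $V_m \frac{d f_{mn}}{d\x} V_{mn}^\top$, $V_n \frac{d f_{mn}}{d\x} V_{mn}^\top$, and their transposes. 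The second step is to symmetrize: by the Remark after Theorem~\ref{thm:multimanifold}, definiteness of the block Jacobian is equivalent to definiteness of its symmetric part, whose $(i,j)$ block is $\tfrac12 V_i\big(\tfrac{d f_j}{d\x} + \tfrac{d f_i}{d\x}^\top\big)V_j^\top$ --- exactly the mechanism producing the combinations $\big(\tfrac{d f_{kl}}{d\x} + \tfrac{d f_l}{d\x}^\top\big)$ shown in the statement. Collecting the survivors and discarding the harmless overall factor $\tfrac12$ leaves precisely the block array displayed, with diagonal entries $V_k\frac{d f_k}{d\x}V_k^\top$ and $V_l\frac{d f_l}{d\x}V_l^\top$ (the individually verified conditions) and the coupling terms off the diagonal. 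Theorem~\ref{thm:multimanifold}(ii) then yields $\x \to \bar\x \in \bigcap_i \M_i$. The extension to $n$ pairwise-coupled disjoint groups is the same bookkeeping: every cross block between two \emph{distinct} groups vanishes, and one is left with a block matrix carrying the per-group conditions on the diagonal and the coupling Jacobians off it, so the conclusion again follows from a single application of Theorem~\ref{thm:multimanifold}.

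I expect the main difficulty to be organizational rather than analytical: getting the support and sign bookkeeping exactly right --- in particular verifying that each $V_s$ is genuinely blind to the other group's coordinates (so the internal cross blocks really do vanish), tracking which $f$ lands in which block of the \emph{symmetrized} Jacobian, and reconciling the ``$>0$'' written in the corollary with the ``$<0$'' convention of Theorem~\ref{thm:multimanifold} (an overall sign carried, as elsewhere in the paper, by writing the Laplacian-type terms with a leading minus). Once that ledger is set up correctly, the result is a transcription of the convergence-primitives theorem.
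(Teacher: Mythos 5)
Your first step matches the paper: specialize Theorem~\ref{thm:multimanifold}(ii), note that disjointness of the groups makes $V_m \frac{d f_n}{d\x} V_n^\top$ and its transpose vanish, and conclude that the projected block Jacobian couples each group's constraint only to the adjacent link constraint. That is exactly how the paper's proof begins (it states $V_n f_m = 0$ and writes the resulting block \emph{tridiagonal} Jacobian $J$). But there is a genuine gap after that. With two groups and one coupling law there are \emph{three} constraint maps $V_m, V_{mn}, V_n$, so the auxiliary Jacobian is a $3\times 3$ block (tridiagonal) matrix, and for longer chains it is larger still; it does not ``leave precisely the block array displayed'' in the corollary. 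The corollary's content is precisely that definiteness of the full tridiagonal block matrix can be certified by checking only the \emph{pairwise}, lower-dimensional $2\times 2$ block conditions, one per adjacent (group, link) pair. Your proposal never bridges that step: asserting the pairwise conditions is not the same as asserting definiteness of the assembled matrix, and a single application of Theorem~\ref{thm:multimanifold}(ii) to the ``displayed array'' does not cover the whole state projection $\mathrm{span}\{[V_m^\top\, V_{mn}^\top\, V_n^\top]\}$.

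The paper closes this gap with Proposition~\ref{prop:tridiagonal}: splitting the quadratic form $\x^\top J \x$ into overlapping two-block pieces shows that a block tridiagonal matrix with definite corner blocks is definite provided each overlapping submatrix
\begin{equation*}
\begin{pmatrix} A_{ii} & 2A_{i,i+1} \\ 2A_{i+1,i} & A_{i+1,i+1} \end{pmatrix}
\end{equation*}
is definite --- note the factor $2$ on the off-diagonal blocks, which the splitting forces and which your symmetrization bookkeeping (discarding ``the harmless overall factor $\tfrac12$'') does not produce. Without this overlapping-submatrix argument, your claimed extension to many pairwise-coupled groups (``the same bookkeeping \dots a single application of Theorem~\ref{thm:multimanifold}'') also does not go through, since the pairwise checks can all hold while the definiteness of the full chain matrix still has to be argued. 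To repair the proposal, keep your reduction to the block tridiagonal Jacobian and then add the quadratic-form splitting step (or an equivalent overlapping-blocks lemma) to convert the corollary's pairwise conditions into definiteness of the full projected Jacobian, after which partial contraction gives convergence to $\bigcap_i \M_i$.
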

\begin{proof}
Since $g_m \cap g_n = \emptyset$, $V_nf_m = 0$, the Jacobian has a block tridiagonal structure:
\[ J = \begin{pmatrix} V_m\frac{d f_m}{d \x}V_m^\top       & V_m\frac{d f_{ml}}{d \x}V_{ml}^\top    &0                                   &  0 & \ldots & 0 \\ 
                       V_{ml}\frac{d f_m}{d \x}V_m^\top    & V_{ml}\frac{d f_{ml}}{d \x}V_{ml}^\top & V_{ml}\frac{d f_{l}}{d \x}V_l^\top &  0 & \ldots & 0 \\
                       0                                   & V_{l}\frac{d f_{ml}}{d \x}V_{ml}^\top   & \ddots                            & \vdots & \vdots \\
                       0 & 0& \ldots & V_{np}\frac{d f_{np}}{d \x}V_n^\top & V_n\frac{d f_n}{d \x}V_n^\top \end{pmatrix} \]
given the block tridiagonal structure the positive definiteness result can be verified by verifying the positive definiteness of lower dimensional matrices following the following proposition.
\end{proof}

\begin{proposition}\label{prop:tridiagonal}
A block tridiagonal matrix with positive block entries $A_{ii}$:
\[ A = \begin{pmatrix} A_{11} & A_{12} & 0 & 0 & \ldots &0 \\ A_{21} & A_{22} & A_{23} & 0 &...&0  \\ 0 & A_{32} & A_{33} & A_{34} &...&0 \\ \vdots & \vdots & \vdots & \vdots & \vdots & \vdots \\ 0& \ldots & 0 & 0 & A_{n,n-1} & A_{n,n} \end{pmatrix}\]
and $A_{11}>0$, $A_{nn}>0$, is positive definite if the submatrices
\begin{equation}\label{cond:subPD}
\begin{pmatrix} A_{ii} & 2A_{i,i+1} \\ 2A_{i+1,i} & A_{i+1,i+1} \end{pmatrix}>0
\end{equation}
\end{proposition}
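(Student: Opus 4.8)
The plan is to bound the quadratic form $\x^\top A \x$ from below by a sum of $2\times 2$ block forms plus two boundary terms, each of which is nonnegative, so that positive definiteness (in the paper's sense, i.e.\ $\x^\top A \x > 0$ for $\x\ne 0$) follows. First I would observe that condition~(\ref{cond:subPD}), after division by the positive constant $2$, is equivalent to
\begin{equation*}
M_i \;\doteq\; \begin{pmatrix} \tfrac12 A_{ii} & A_{i,i+1} \\ A_{i+1,i} & \tfrac12 A_{i+1,i+1} \end{pmatrix} \;>\; 0, \qquad i = 1,\dots,n-1 ,
\end{equation*}
since scaling preserves definiteness; in particular each $A_{ii}>0$, so the separately stated endpoint hypotheses $A_{11}>0$ and $A_{nn}>0$ are actually automatic.

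Next, for an arbitrary block vector $\x=(\x_1^\top,\dots,\x_n^\top)^\top$ partitioned conformally with $A$, I would expand $\x^\top A \x = \sum_{i} \x_i^\top A_{ii}\x_i + \sum_{i}\big(\x_i^\top A_{i,i+1}\x_{i+1} + \x_{i+1}^\top A_{i+1,i}\x_i\big)$, using that all blocks more than one step off the diagonal vanish, and then redistribute the diagonal energy by assigning half of each $A_{ii}$ to the window $(i-1,i)$ and half to the window $(i,i+1)$. This yields the exact identity
\begin{equation*}
\x^\top A \x \;=\; \tfrac12\,\x_1^\top A_{11}\x_1 \;+\; \tfrac12\,\x_n^\top A_{nn}\x_n \;+\; \sum_{i=1}^{n-1} \begin{pmatrix}\x_i\\ \x_{i+1}\end{pmatrix}^\top M_i \begin{pmatrix}\x_i\\ \x_{i+1}\end{pmatrix},
\end{equation*}
the only bookkeeping being that each off-diagonal block appears exactly once (in its own window), each interior $A_{ii}$ is reassembled as $\tfrac12+\tfrac12$, and the two endpoint blocks are each short one half, which is supplied by the two boundary terms.

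Since $M_i>0$ for every $i$ and $A_{11},A_{nn}>0$, every term on the right is $\ge 0$, whence $\x^\top A\x \ge 0$. For strictness, if $\x^\top A\x = 0$ then, being a sum of nonnegative terms, each vanishes; in particular $\begin{pmatrix}\x_i\\ \x_{i+1}\end{pmatrix}^\top M_i \begin{pmatrix}\x_i\\ \x_{i+1}\end{pmatrix}=0$ together with $M_i>0$ forces $\x_i=\x_{i+1}=0$ for every $i=1,\dots,n-1$, hence $\x=0$. Thus $A>0$, and the extension asserted in Corollary~\ref{cor:cor1} follows by relabeling the group blocks so that the interconnection structure becomes block tridiagonal and applying the proposition verbatim.

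\textbf{Main obstacle.} There is no deep difficulty here; the single idea that makes it work is the \emph{splitting} --- putting half of each diagonal block with each neighbouring coupling window, which is precisely why the factor $2$ appears in~(\ref{cond:subPD}); a naive attempt to absorb the couplings without that factor would not close. The remaining care is purely bookkeeping on the two leftover boundary half-blocks and the rank argument for strict positivity.
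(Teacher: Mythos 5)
Your proof is correct and takes essentially the same route as the paper's: split each diagonal block into two halves and regroup the quadratic form into overlapping $2\times 2$ windows (which is exactly where the factor $2$ in condition~(\ref{cond:subPD}) originates), leaving the boundary half-terms $\tfrac12 \x_1^\top A_{11}\x_1$ and $\tfrac12 \x_n^\top A_{nn}\x_n$. Your write-up is in fact somewhat more careful than the paper's, stating the splitting identity exactly and handling strict positivity (and the redundancy of the endpoint hypotheses) explicitly.
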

\begin{proof}
by partitioning the space as $\x = [x_1 \quad x_2\  ... \ x_n]$, the quadratic form is:
\begin{footnotesize}
\begin{eqnarray*}
\x^\top A \x &=& x_1^TA_{11}x_1 + x_1^T(A_{12}+A^\top_{21})x_2 + x_2^TA_{22}x_2 + x_1^T(A_{23}+A^\top_{32})x_2 + x_3^TA_{33}x_3 +\dots \\
	&=& \frac{1}{2}x_1^TA_{11}x_1 + \frac{1}{2}x_1^TAx_1 + x_1^T(A_{12}+A^\top_{21})x_2 + \frac{1}{2}x_2^TA_{22}x_2 + \frac{1}{2}x_2^TA_{22}x_2 +  \ldots + \frac{1}{2}x_n^TAx_n\\
	&=& \frac{1}{2}x_1^TAx_1 + \frac{1}{2} \begin{pmatrix} x_1 \\ x_2 \end{pmatrix}^\top\begin{pmatrix} A_{11} & 2A_{12} \\ 2A_{21} & A_{22} \end{pmatrix}\begin{pmatrix} x_1 &\x_2 \end{pmatrix} + \frac{1}{2}\begin{pmatrix} x_2 \\ x_3 \end{pmatrix}^\top\begin{pmatrix} A_{22} & 2A_{23} \\ 2A_{32} & A_{33} \end{pmatrix}\begin{pmatrix} x_2 &\x_3 \end{pmatrix} + ... + \frac{1}{2}x_n^TAx_n
\end{eqnarray*}
\end{footnotesize}
which is positive if condition in eq. (\ref{cond:subPD}) is met.
\end{proof}

In the next section, we illustrate the application of the results in this section with a series of examples where application of the theorem and the discussed corollaries give insight into the construction of different convergence mechanisms.

\section{Examples}\label{sec:applications}

In a first example we present a useful application of the analysis approach to define a decentralized control algorithm based on a set of primitives whose global convergence properties can be verified from the results of theorem \ref{thm:multimanifold}.

\begin{example}{\bf Global convergence to formation with only relative information}
Consider a group of 8 agents $\x_i,\ i \in \{1,..N\}$, and a control law:
\begin{eqnarray}
	\uu_i = k_1(R_l(t)(\x_{i+1}-\x_{i}) + R_l^\top(t)(\x_{i-1}-\x_{i})) \quad for \quad i \in \g_l
\end{eqnarray}
with agent groups defined as $\g_1=\{1,2,3,4\}, \g_2=\{3,4,5,6\}, \g_3=\{5,6,7,8\}$, and where $R_{\pi/4}(t) \in \reals^{3x3}$ is a rotation matrix that by switching its rotation axis can switch the orientation of the formation. (The switching rate should be slow enough compared to the convergence rate to allow for the formation to convege to a fixed subspace before switching).
Namely, the rotation matrices can be defined as $R_l = T_l(t)R_{\pi/4}T^\top_l(t)$, where $T_1$ is a constant arbitrary direction matrix  (we can assume $T_1=I$ without loss of generality):
\begin{eqnarray*}
T_2 &=& \begin{pmatrix} 1& 0& 0\\ 0& 1& 0\\ 0& 0& -1\end{pmatrix} T_1\\
T_{12} &=& \begin{pmatrix} 0&\cos(\phi)&-\sin(\phi)\\ 0&\sin(\phi)& \cos(\phi) \\ 1&0&0\end{pmatrix} T_1
\end{eqnarray*}
And the subspace constraints are:
\begin{eqnarray*}
V_1 &=& \begin{pmatrix} V_{r4}(I_n\otimes T_1) &\mathbf 0_{6 \times 6}\end{pmatrix}\\
V_2 &=& \begin{pmatrix} \mathbf 0_{6 \times 3}&V_{r4}(I_n\otimes T_{12})& \mathbf 0_{6 \times 3} \end{pmatrix}\\
V_3 &=& \begin{pmatrix} \mathbf 0_{6 \times 6}&V_{r4}(I_n\otimes T_2)  \end{pmatrix}
\end{eqnarray*}
From theorem \ref{thm:multimanifold}, a numerical verification of the global convergence and stability to the specific subspace, namely a cube formation with a face pointing at a given direction defined by $\phi$, is posible.
Figure \ref{fig:Cube}, shows a simulation of the dynamics the formation converging to a cubic formation. At $t=15$, transformaion T whitches and the orientation of the formation changes. The global orientation state of the formation is contralled by a parameter change in only 4 of the vehicles.

\begin{figure}
\centering
\includegraphics[width =0.19\textwidth]{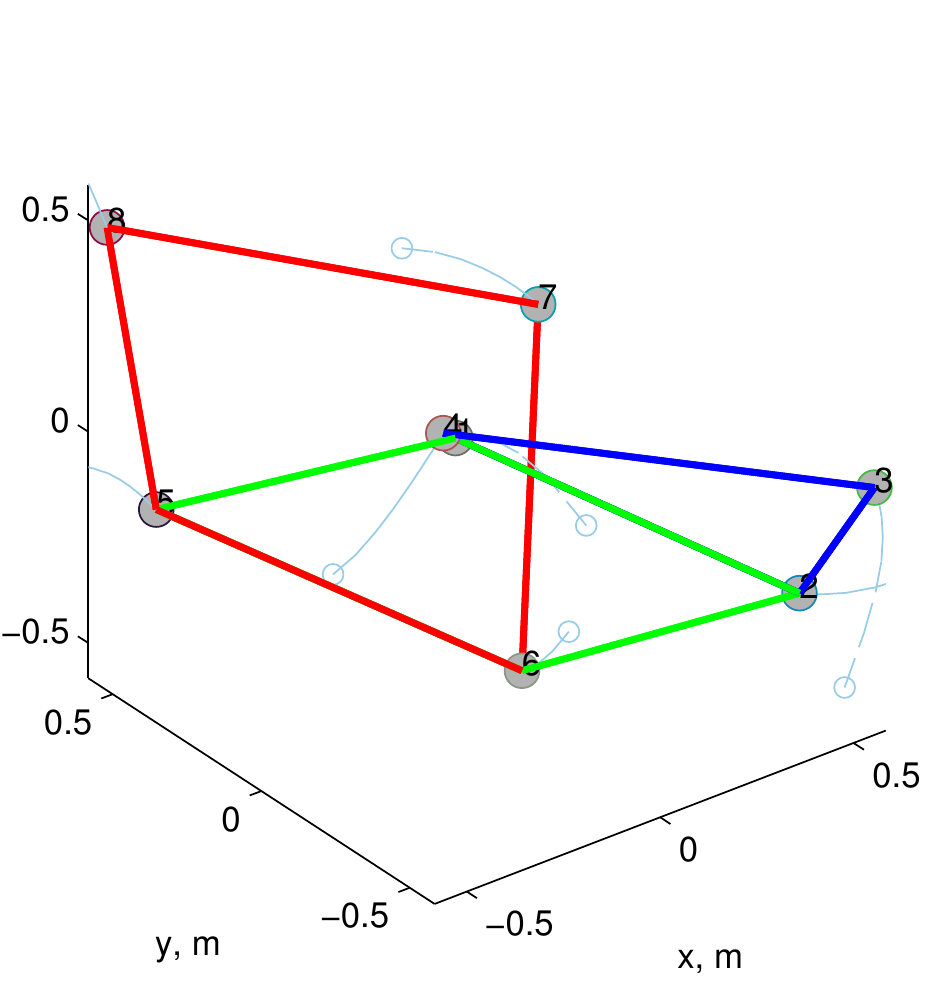}
\includegraphics[width =0.19\textwidth]{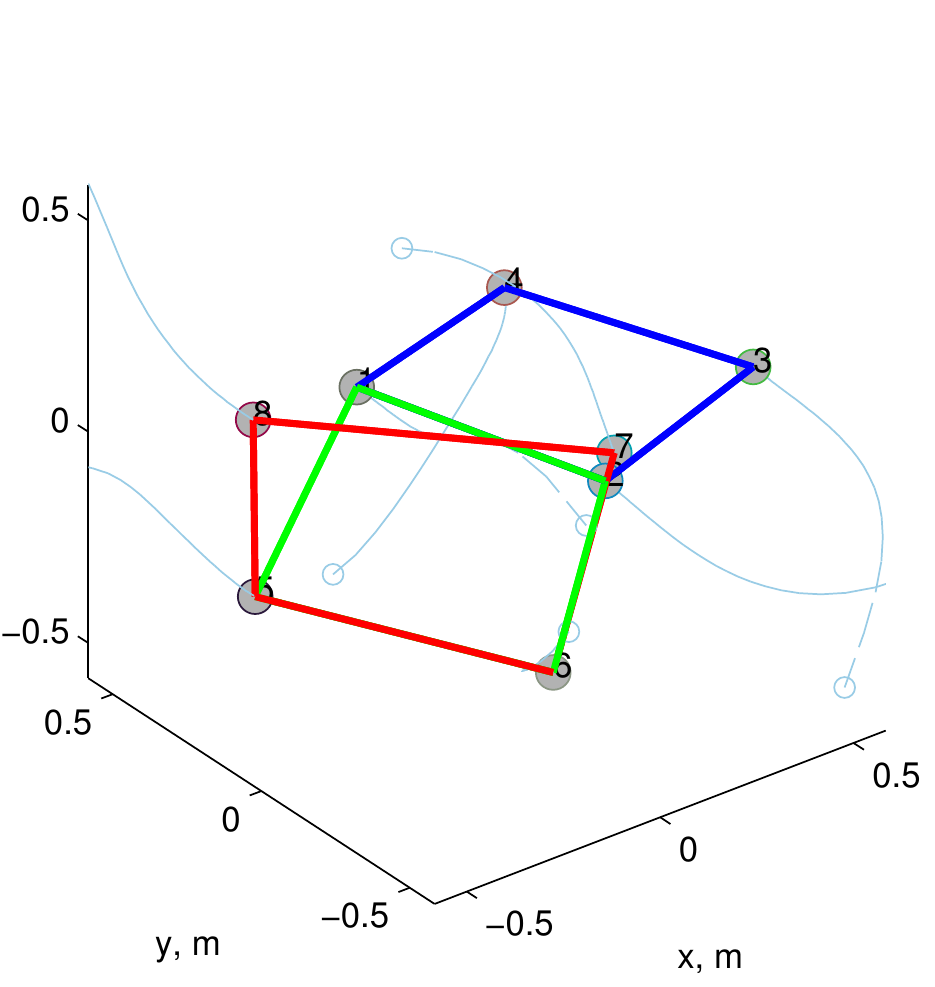}
\includegraphics[width =0.19\textwidth]{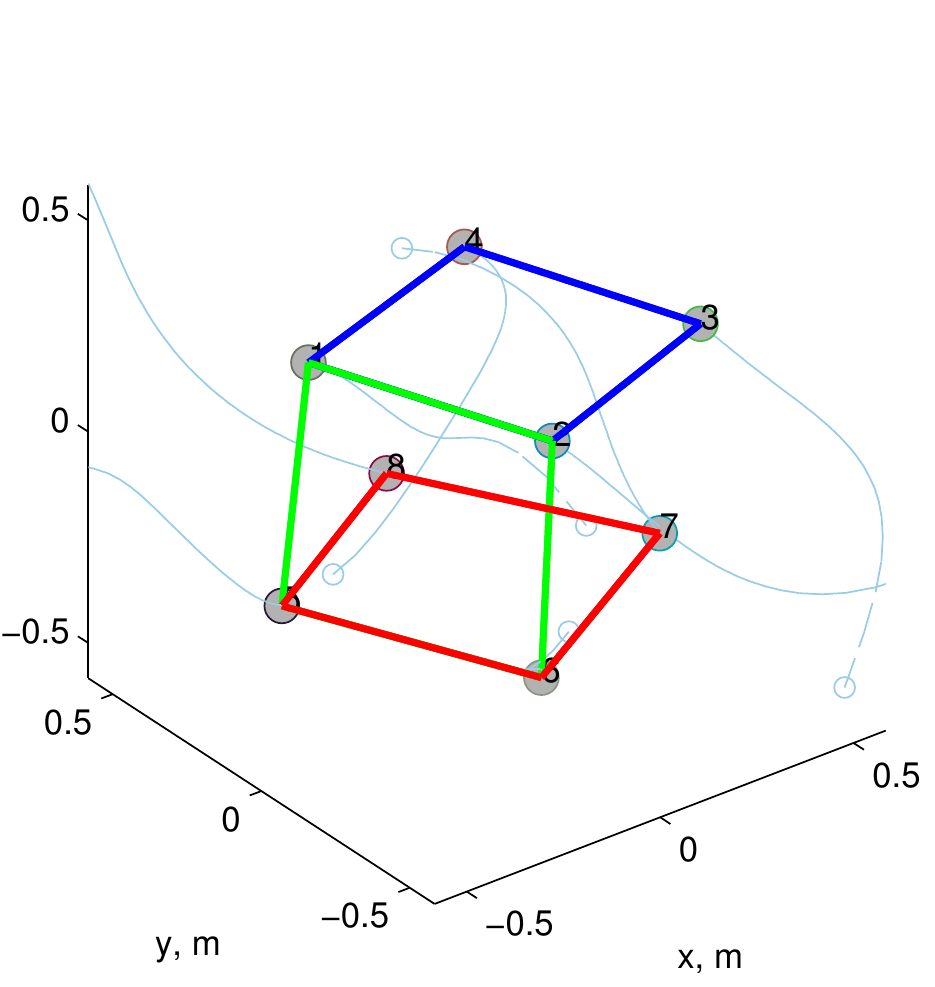}
\includegraphics[width =0.19\textwidth]{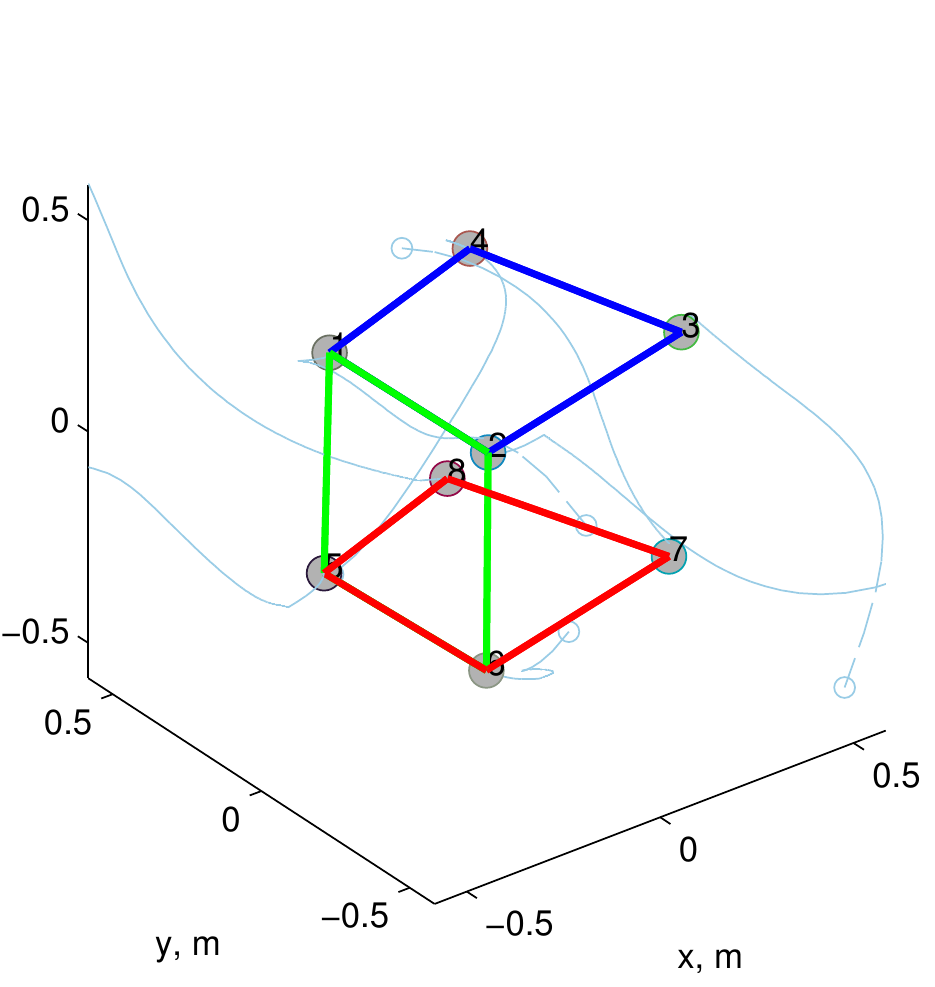}
\includegraphics[width =0.19\textwidth]{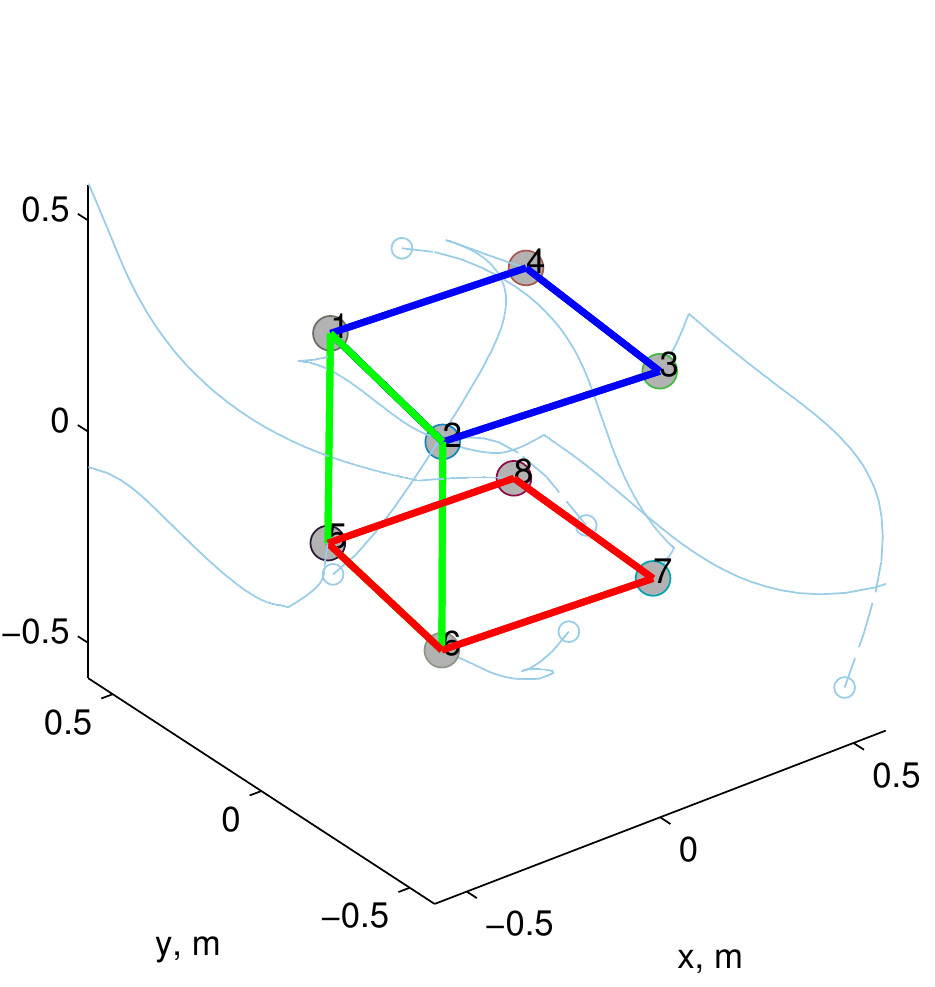}\\
\includegraphics[width =0.8\textwidth]{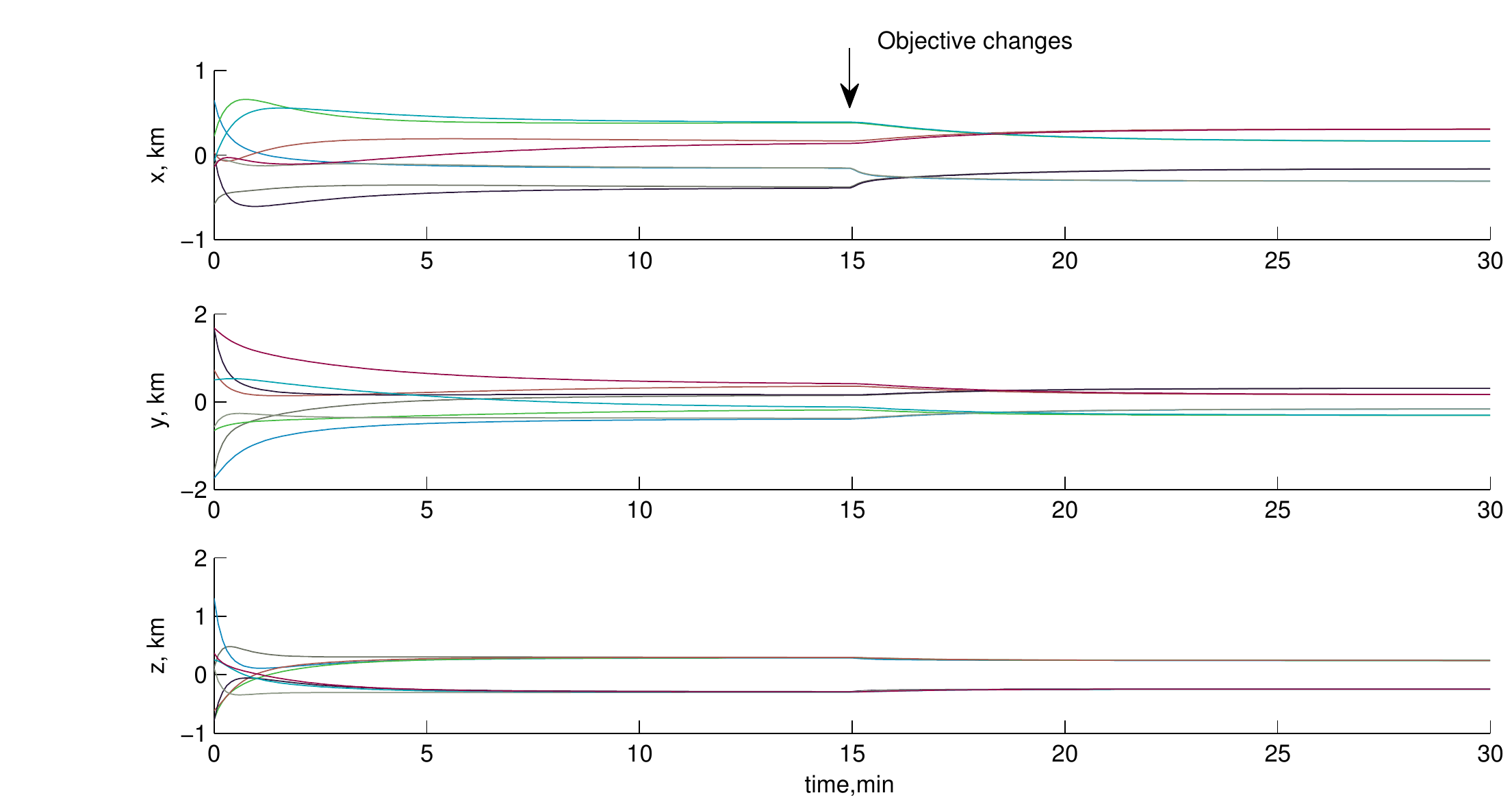}
\caption{Sequence of snapshots agents converge to a cube in 3D and rotate. Trajectory time history below \label{fig:Cube} }
\end{figure}
\end{example}

Another example that illustrates the application to derive convergence properties for general cooperative control problems consists of studying a formation of vehicles surrounding a target:

\begin{example}{\bf A regular formation surrounding an (un)cooperative target}
Consider a system of $n+1$ agents, $n$ of them with dynamics as in eq. (\ref{eq:controllaw}), and a leader agent with state $x_1$ that uses information from all others and/or all other agents use information from it:
\begin{eqnarray*}
	\dot \x_1 &=& f_p(\x_1,\x)\\
	\dot \x_i &=& R(\x_{i+1}-x_{i}) + R^\top(\x_{i-1}-\x_i) + f_c(\x_i-\x_1)
\end{eqnarray*}
$i=\{2,..,n\}$, and some general fucntions $f_c, f_p$ $f_c(0)=0$. Denote $\x = \{\x_2, \x_3,\ldots ,\x_n\}$. 
The overall dynamics can then be written as:
\[	 \begin{pmatrix} \dot \x_1 \\ \x \end{pmatrix} =  \begin{pmatrix} f_p (\x_1,\x) \\ \LL_s\x + f_c(\x,\x_1) \end{pmatrix} \doteq F(\xi)\]
We are interested to determine convergence to the subspace defined by $V\bar \xi = 0$, where: 
\[V = \begin{pmatrix} 0 & V_{rn} \\ nI &-\mathbf 1^T_n\otimes I \end{pmatrix} \]

where $\mathbf 1_n \in \reals^n$ is a vector of ones, $[n I_2\ - \mathbf 1^T_n\otimes I]$ is a projection matrix into a subspace where the target is in the center of the formation and $V_{rn}$ is the projection to the subspace of regular polygons in eq. (\ref{Vrn}).  Considering the following results:
\begin{eqnarray*}
		\LL_s (\mathbf 1_n\otimes I) &=& 0,\\ 
		V_{rn} \frac{d\f_c}{d\x} [\mathbf 1_n\otimes I] &=& 0 
\end{eqnarray*}
We find that:
\[ \begin{array}{lcr}
	V \frac{dF}{d\xi} V^\top = \begin{pmatrix}  V_{rn}\LL_s V_{rn}^\top + V_{rn}\frac{d f_c}{d \x}V_{rn}^\top & 0 \\ n\frac{d f_p}{d \x}V_{rn}^\top &  n\sum_i(\frac{d f_c}{d \x_i} - \frac{d f_p}{d \x_i}) + n^2 (\frac{d f_p}{d \x_1} - \frac{d f_c}{d \x_1}) \end{pmatrix}
\end{array}\]

Then, if $\frac{d f_c}{d \x} < 0$, a sufficient condition to surround the target is: 
\[\sum_i(\frac{d f_c}{d \x_i} - \frac{d f_p}{d \x_i}) + n(\frac{d f_p}{d \x_1} - \frac{d f_c}{d \x_1})< 0\]
Note that $f_c$, $f_p$ are arbitrary functions and the result gives a sufficient condition to achieve the mission objective in terms only of the gradients of  $f_p$ and $f_c$.  
\end{example}

In the next example an example that can be applied to a fragmented aperture system is considered, where individual telescopes are deployed in an arbitrary configuration and the objective consists on achieving convergence to a formation where in its final configuration the vehicles are as close as possible to each while maintaining a minimum separation between them to achieve the recreation of a full aperture composed by many small segments. The problem is related to the two-dimensional sphere packing problem and a solution can be described by a series of concentric hexagonal formations. In this example, a distributed control law based on theorem \ref{thm:multimanifold} is proposed and sufficient conditions for global converge to such configuration with any number of spacecraft are derived.

\begin{example}{\bf Convergence to a packed formation.}
Consider a set of agents with first order dynamics $\x_i=\uu_i$, grouped in M sets $\g_l$ of 6 vehicles, and consider the input:
\begin{eqnarray*}
	\uu_i = R_{\pi/6}(\x_{i+1}-\x_{i}) + R_{\pi/6}^\top(\x_{i-1}-\x_{i}) \f_{m} \quad for \quad i \in \g_m 
\end{eqnarray*}
$\g_m$ being a group of size 6, $m \in \{1,...,M\}$.
This input was shown to make the system of agents $\x_i,\ i \in \g_l$ converge to manifold $\M_6:\{V_{6} \x = 0\}$ of regular hexagonal formations.

Now, consider interconnection control laws between the different sets $\g_l$:
\begin{eqnarray}
	\uu_k = R_{\pi/3}(\x_{k+1}-\x_{k}) + R_{\pi/3}^\top(\x_{k-1}-\x_{k}) = \f_{ml}  \quad for \quad k \in \g_{ml}; 
\end{eqnarray}
$\g_{ml}$ being a group with 3 agents, some agents in $\g_m$ and some in $\g_l$, which has been shown to converge to triangular formations such that $\M_3:\{V_{r3} \x = 0\}$.

The constraints of the desired convergence subspace are $V_m = [V_{r6}\ 0_{4\times 6}]$, $V_l = [0_{4\times 6}\ V_{r6}]$, and the corresponding link constraints $V_{ml} = [ v_1, \ldots, v_N ], v_n \in \reals^{3\times 3}$, $v_{m1} = -I,$, $v_{m2} = I+R_{2\pi/3}$, $v_{l1} = -R_{2\pi/3}$ for some $m1,m2 \in S_m$, $l1 \in \g_l$. 

The Jacobians will be denoted as $\frac{d \f_\theta}{d\x} = A_\theta$. 

Then, from theorem \ref{thm:multimanifold} and corollary \ref{cor:cor1} showing global convergence to a grid defined by a pair of concentric, aligned hexagonal patterns $\g_m$ and $\g_l$ with three-agent link interconnections  between them $\g_{ml}$, requires showing that:
\begin{eqnarray}\label{eq:VerifySP}
 \begin{pmatrix} V_m A_m V_m^\top & V_m \big(A_{ml}+A^\top_{m}\big) V_{ml}^\top \\ V_{ml} \big(A_m+A_{ml}^\top\big) V_{m}^\top & V_{ml} A_{ml} V_{ml}^\top \end{pmatrix} = 
 \begin{pmatrix} V_{r6} \LL_s^{(6)} V_{r6}^\top & V_m (A_{ml}+A^\top_{m}\big) V_{ml}^\top \\ V_{ml} \big(A_m+A_{ml}^\top\big) V_{m}^\top & V_{3} \LL_s^{(3)} V_3^\top \end{pmatrix} &>& 0 \nonumber \\
&&and, \nonumber \\
 \begin{pmatrix} V_{ml} A_{ml} V_{ml}^\top & V_{ml} \big(A_l+A^\top_{ml}\big) V_l^\top \\ V_l \big(A_{ml}+A_l^\top\big) V_{ml}^\top & V_l A_l V_l^\top \end{pmatrix} = 
 \begin{pmatrix} V_{r3} \LL_s^{(3)} V_{r3}^\top & V_{ml} (A_l+A^\top_{ml}\big) V_l^\top \\ V_l \big(A_{ml}+A_l^\top\big) V_{ml}^\top & V_{6} \LL_s^{(6)} V_6^\top \end{pmatrix} &>& 0 \nonumber \\
\end{eqnarray}
where $\LL_s^{(n)} = L \otimes R + L^\top \otimes R^\top$, with $L \in \reals^{n \times n}$.

This result directly verifies \emph{global} convergence for \emph{any}
number of rings with corresponding interconnecting links, and 
overall global convergence of the system is guaranteed by the result in
eq. (\ref{eq:VerifySP}).

Figure \ref{fig:SPgrid} shows snapshots of a a simultion of the dynamics of an example of the convergence for such a controller in a scheme with three hexagonal rings. Agents in $\g_1 = \{1,\ldots,6\}$, $\g_2 =\{7,\ldots,12\}$ $\g_3 =\{13,\ldots,18\}$ converge to hexagonal formations. Formations $\g_{12}=\{1,7,2\}, \g_{12}=\{5,4,10\}, \g_{13}=\{6,17,12\}, \g_{13}=\{3,4,9\}$, establish links that define a manifold $\M = \bigcap \M_i$ corresponding to a regular sphere packing grid. 

\begin{figure}
\centering
\includegraphics[width = 0.4\textwidth]{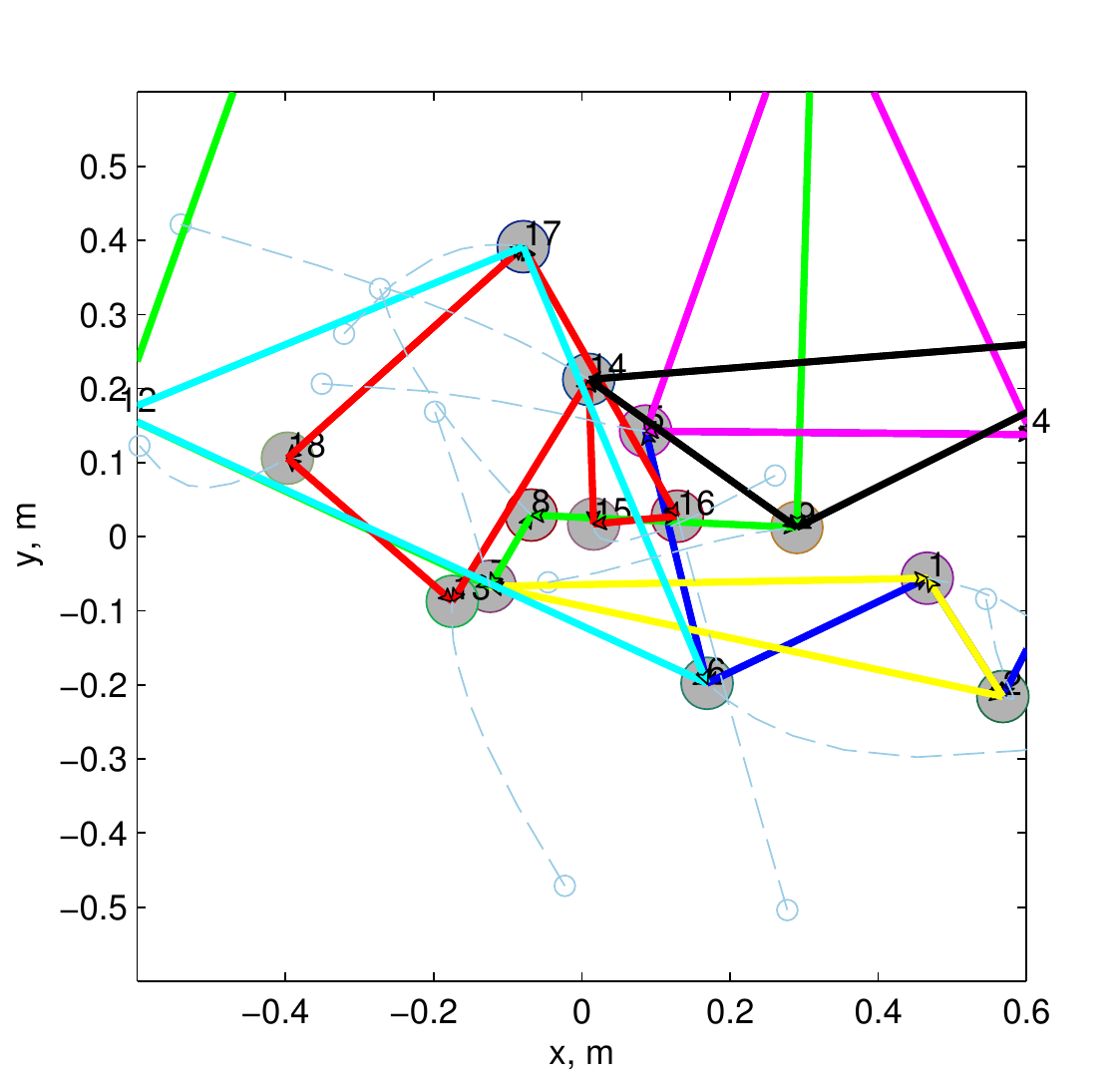}
\includegraphics[width = 0.4\textwidth]{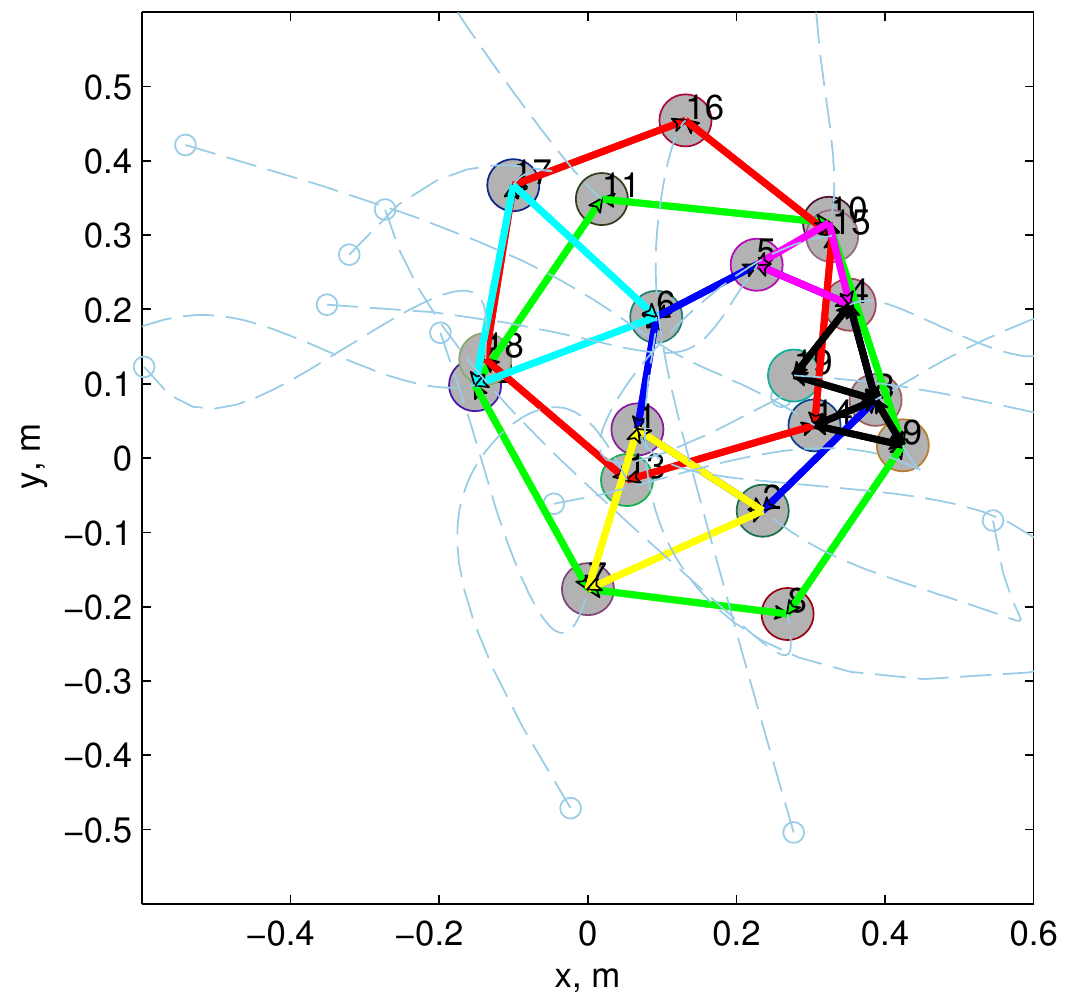}\\
\includegraphics[width = 0.4\textwidth]{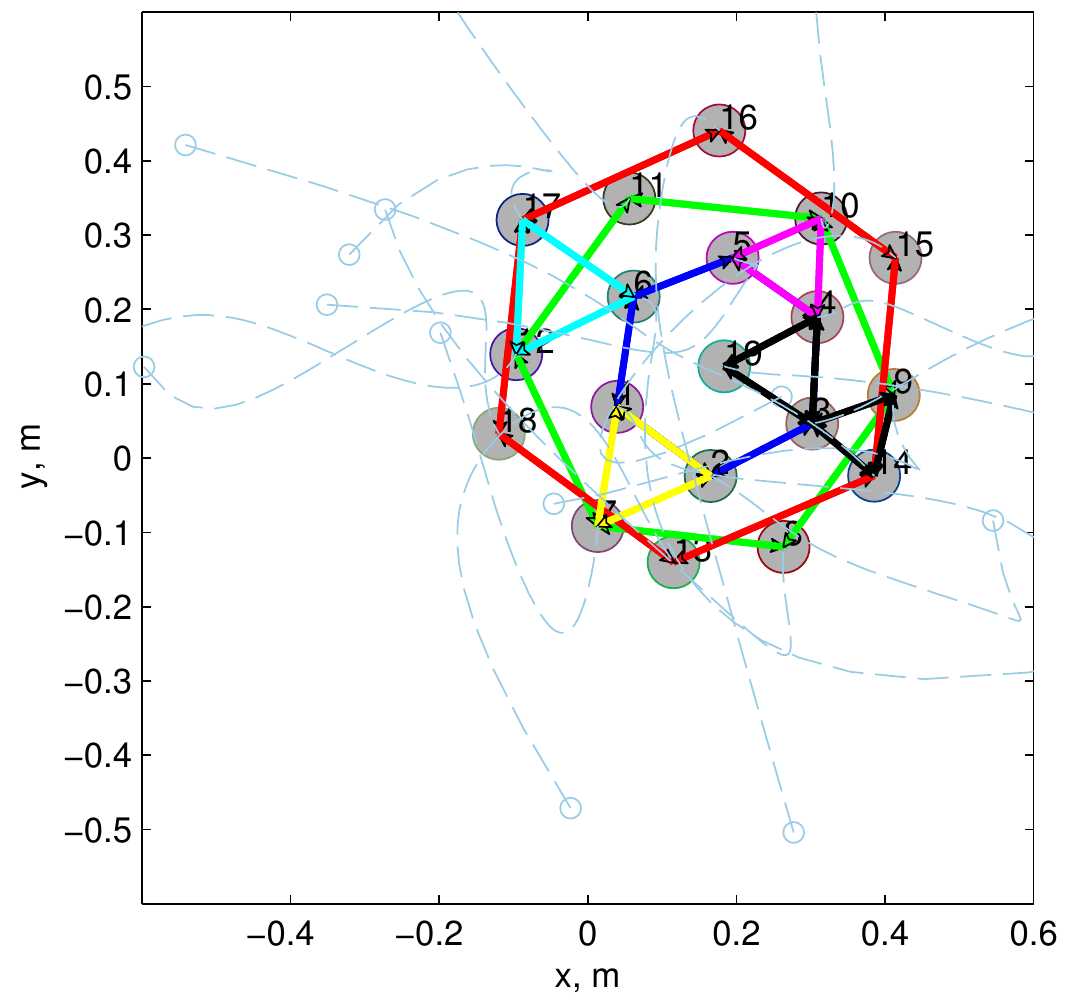}
\includegraphics[width = 0.4\textwidth]{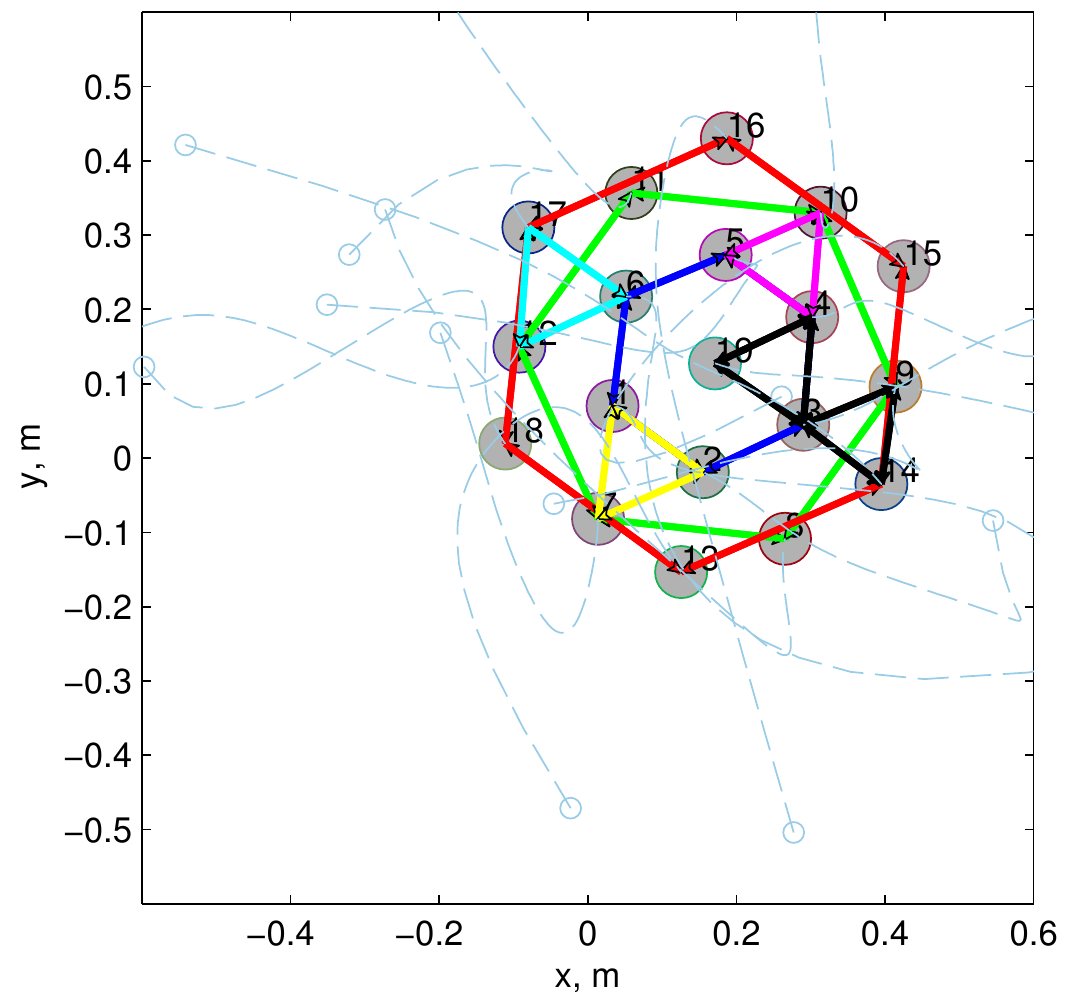}
\caption{Agents converge to a packed grid by imposing some convergence constrains shown by the arrows. \label{fig:SPgrid} T=10s, 90s, 250s, 350s}
\end{figure}

\end{example}

We close this section by presenting an example with a result guaranteing global convergence to a desired size of formation when the size is commanded by one of the spacecraft using only relative information to its neighbor(s).

\begin{example}{\bf Leader based convergence to desired size.}
In this example we consider a system of $n$ agents with a control law (\ref{eq:controllaw}) and a leader that controls its separation to other agents by a control law $f_r(d)$:
\begin{eqnarray}\label{eq:leaderSize}
	u_i &=& R(\pi/N)(x_{i+1}-x_{i}) + R(\pi/N)^\top(x_{i-1}-x_{i})\quad   i={2,3..N} \\
	u_1 &=& R(\alpha)(x_{2}-x_{1}) + R(\alpha)^\top(x_{N}-x_{1}) + f_r(||x_2-x_1||^2-\bar \rho)(x_2-x_1) 
\end{eqnarray}
where $f_r(p)$ is a positive function of $p$ with equilibrium point $0$ such that $f_r(0) = 0$. Then, the system exponentially converges to a symmetric formation with interagent separation $\bar \rho$.

\begin{proof}
The principle of the proof is to show that the dynamics of an auxiliary system $\y=V(\x)$ are contracting, and $\y=0$ is a particular solution of the system. Thus $\y$ tends to zero.

Consider, the overall dynamics of the system of single integrators with control law \ref{eq:leaderSize} :
\begin{eqnarray}\label{eq:SizeCtrl}
\dot \x = -\LL_s \x + f_r(\x)A_1\x
\end{eqnarray}
where $\LL_s$ is the Laplacian defined in previous section which converges to regular formations and $A_1 = \begin{pmatrix} -I& I& 0 & \ldots \\ 0& 0& 0 & \ldots \\ \vdots &\vdots &\vdots &\vdots \end{pmatrix}$.
Consider the auxiliary variables
\begin{eqnarray}
\begin{pmatrix} y_1 \\ y_2 \end{pmatrix} = \begin{pmatrix} V \x \\ f_r(\x) \end{pmatrix}
\end{eqnarray}
where $V^\top V + U^\top U = I$, $\LL_s U^\top U \x = 0$, which define an nonlinear invariant manifold $\M_{n\rho}: \{\x\ |\ V \x=0, f_r(\x) = 0 \}$, since $\dot \x = 0$ for $\x \in \M_r$. Their dynamics are given by:
\begin{eqnarray}\label{eq:auxSys}
\begin{pmatrix} \dot y_1 \\ \dot y_2 \end{pmatrix} &=& \begin{pmatrix} V \dot \x \\ (\nabla_{\x} f_r) \dot \x \end{pmatrix} 
= \begin{pmatrix} V(-\LL_s \x + y_2 A_1\x)  \\ (\nabla_{\x} f_r) (-\LL_s \x + y_2 A_1\x) \end{pmatrix} \nonumber \\
\begin{pmatrix} \dot y_1 \\ \dot y_2 \end{pmatrix} &=& \begin{pmatrix} -V \LL_s V^\top & VA_1\x  \\ -(\nabla_{\x} f_r) \LL_s V^\top & (\nabla_{\x} f_r)A_1\x \end{pmatrix} \begin{pmatrix} y_1 \\ y_2 \end{pmatrix} = \mathbf f(\x,\y)
\end{eqnarray}
The auxiliary system (\ref{eq:auxSys}) is contracting and all trajectories will converge to the same trajectory if:
\begin{eqnarray*}
\mathbf F = \Theta \frac{\partial \mathbf f}{\partial \y} \Theta^{-1} &<& 0
\end{eqnarray*}
Specifically $\y=0$ is a solution, then, any trajectory of system (\ref{eq:auxSys}) converges to $\y=0$, which means that any solution of (\ref{eq:SizeCtrl}) converges to $\x \in \M_{n\rho}$.

$V$ and $V_{rn}$ are related through an invertible transfomation $V=TV_{rn}$, and consider an invertible transformation that commutes with T, i.e $\Theta = \begin{pmatrix}I &0 \\ 0& \theta \end{pmatrix}$, where $\theta > 0 \in R$.  Then,
 \[T \Theta  \frac{d \mathbf f}{d \y} \Theta^{-1} T^\top = \Theta T  \frac{d \mathbf f}{d \y} T^\top \Theta^{-1} < 0 \Leftrightarrow \mathbf F < 0\].

Then, a sufficient condition for convergence of system (\ref{eq:SizeCtrl}) to a regular formation with characteristic size $\bar \rho$ is:
\begin{eqnarray}
J_y &=& \begin{pmatrix} -V_{rn} \LL_s V_{rn}^\top & \theta^{-1}V_{rn}A_1\x  \\ -\theta(\nabla_{\x} f_r) \LL_s V_{rn}^\top & (\nabla_{\x} f_r)A_1\x \end{pmatrix} < 0
\end{eqnarray}
We have shown that $V_{rn} \LL_s V_{rn}^\top < 0$, and we also have that:
\begin{eqnarray*}
V_{rn}A_1\x = \begin{pmatrix} -I \\ 0 \\ \vdots \\ 0 \end{pmatrix} (x_2 - x_1) = D_1(x_2 - x_1)\\ 
\nabla_{\x}f_r = 2(x_2 - x_1)^\top\frac{d f_r(\rho)}{d\rho}\left[I\ I\ 0\ \ldots 0 \right] = 2\frac{d f_r(\rho)}{d\rho} (x_2 - x_1)^\top D_2\\
\end{eqnarray*}

Then, the projected Jacobian $J_y$ is negative definite iff:
\begin{eqnarray}
(\nabla_{\x} f_r)A_1\x = -2(x_2-x_1)^2\frac{d f_r(\rho)}{d\rho}  &<& 0 \nonumber \\ 
\Rightarrow \frac{d f_r(\rho)}{d\rho}>0	
\end{eqnarray}
and (from a Schur factorization):
\begin{small}
\begin{eqnarray*}
(\nabla_{\x} f_r)A_1\x + \frac{1}{2}(\theta^{-1} V_{rn}A_1\x - \theta((\nabla_{\x} f_r) \LL_s V_{rn}^\top))(V_{rn}\LL_s V_{rn}^\top)^{-1}\frac{1}{2}(\theta^{-1}V_{rn}A_1\x - \theta((\nabla_{\x} f_r) \LL_s V_{rn}^\top))^\top <0 \nonumber \\
\Rightarrow \frac{1}{4}\left[ (x_2-x_1)^\top\left( D_1 - 2\theta^{2}\frac{df_r}{d\rho}D \right)(V_{rn}\LL_s V_{rn}^\top)^{-1}\left( D_1 - 2\theta^{2} \frac{df_r}{d\rho}D \right)^T (x_2-x_1)  \right] \qquad
\\
\qquad\qquad\qquad\qquad\qquad< -2\theta^2 \frac{d f_r}{d\rho}(x_2-x_1)^2 \nonumber\\
\Rightarrow \frac{1}{4}\left[ \left( D_1 - 2\theta^{2}\frac{df_r}{d\rho}D \right )( V_{rn}\LL_s V_{rn}^\top)^{-1}\left( D_1 - 2 \theta^2 \frac{df_r}{d\rho}D \right)^T  \right] < -2\theta^2 \frac{d f_r}{d\rho} I
\end{eqnarray*}
\end{small}
$D = D_2\LL_s V_{rn}^\top$, verifying that the negative definiteness of $J_y$ does not depend on the value of $\x$. If for a particular value of $\x$, a constant transformation $\Theta$ is found which verifies the negative definiteness it holds uniformly. 

Then, the satifaction of the two above conditions is sufficient to guarantee convergence in terms of $\frac{d\f}{d\rho}$. For example, for $\frac{d\f}{d\x}=1$, $N=5$, $\Theta = \begin{pmatrix}I &0 \\ 0& 0.5 \end{pmatrix}$, verifies \emph{global} convergence to any desired size of formation.
\end{proof}
\end{example}

\section{Concluding remarks}\label{sec:Conclusion}
A new approach to formation control has been presented by using the tools of contraction theory. The advantage of this approach include global convergence to formations defined by constraints of the desired state and not trajectory points. Further research would consider linear or affine transformation for convergence to not only regular formations, and the use of the convergence primitive tools to define more complex convergence states. Additionally, extending the results to a time varying manifold, seems a feasible  straightforward extension. 
We will consider tests of the algorithms in the SPHERES testbed in the microgravity environment of the ISS. 






\appendix
\section{Eigenvalues of the projected Laplacian}\label{app:A}
The derivation of the eigenvalues of $\RR_{\eta} \LL^{(N)} \LL_m(\alpha) (\LL^{(N)})^\top \RR_{\eta}^\top$ ( denoted here $\RR \LL \LL_{sm} \LL^\top \RR^\top$ for brevity of notation) uses the properties of block-circulant matrices, specifically, the fact that they all belong to a commutative algebra. We have that:
\begin{eqnarray*}
\eig[\RR \LL \LL_{sm} \LL^\top \RR^\top] 
  &=& \eig[\RR \RR^\top] \eig[\LL^{(N)}] \eig[(\LL^{(N)})^\top] \eig[\LL(\alpha) + \LL^\top(\alpha)]\\
  &=& \eig[(I_n\otimes R(\pi/N))(I_n\otimes R^\top(\pi/N))]\\
  && \times \eig[(L_1\otimes I_3)(L_1^\top \otimes I_3)]\eig[\LL(\alpha) + \LL(\alpha)^\top]\\
  &=& \eig[(L_1\otimes I_3)(L_1^\top \otimes I_3)] \eig[\LL(\alpha) + \LL(\alpha)^\top]
\end{eqnarray*}
Then,
\begin{eqnarray}
\text{eig}[\RR \LL \LL_{sm} \LL^\top \RR^\top]
 &=& \lambda_{ik} \quad i \in \{1,..,N\}, k\in \{-1,0,1\} \nonumber \\
\lambda_{ik} &=& (1-e^{-j2im\pi/N})(1-e^{j2im\pi/N})((e^{jk\alpha}-e^{jk\alpha - j2im\pi/N}) + (e^{jk\alpha} - e^{-jk\alpha + j2i\pi/N})) \nonumber \\
&=&  ( 2 - e^{-j2i\pi/N} - e^{j2i\pi/N} ) 2(\cos(k\alpha) - \cos(k\alpha - 2im\pi/N)) \nonumber \\
&=&  4( 1 - \cos(2i\pi/N))(\cos(k\alpha) - \cos(k\alpha - 2im\pi/N)) \label{eq:eigrLLLR}
\end{eqnarray}
Then, for $i=1$: \ \ \ \ \ \ \ \ \ \ $ (1 - \cos(2i\pi/N)) = 0
\ \ \ \ \ \Rightarrow \ \ \ \ \ \lambda_{1k} = 0$

for $i=N$: \ \ \ \ \ \ \ \ \ \ $ 
(\cos(k\alpha) - \cos(k\alpha-2mi\pi/N)) = 0 \ \ \ \ \ \Rightarrow \ \ \ \ \ \lambda_{Nk} = 0 $

otherwise for $1<i<N$, $|\alpha| \leq 2\pi/N$ : 
\begin{eqnarray*}
(\cos(k\alpha) - \cos(k\alpha-2im\pi/N)) > 0 \\
(1 - \cos(2i\pi/N)) > 0
\end{eqnarray*}
\begin{proposition}\label{prop:EigenLap}
\begin{eqnarray}
V_{rn} \LL_{sm} V_{rn}^\top = W_n \RR_{\eta} \LL \LL_{sm}(\alpha) \LL^\top \RR_{\eta}^\top W_n^\top > 0 \quad,
\end{eqnarray}
\end{proposition}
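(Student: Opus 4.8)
The plan is to read the statement as two claims bundled together — an algebraic identity, $V_{rn}\LL_{sm}V_{rn}^\top = W_n\RR_\eta\LL^{(N)}\LL_{sm}(\LL^{(N)})^\top\RR_\eta^\top W_n^\top$, and a positive‑definiteness claim — and to settle the identity in one line before concentrating on the inequality. The identity is just the factorization $V_{rn}=W_n\RR_\eta\LL^{(N)}$ substituted into $V_{rn}\LL_{sm}V_{rn}^\top$ together with $\LL_{sm}^\top=\LL_{sm}$. Abbreviating $M:=\RR_\eta\LL^{(N)}\LL_{sm}(\LL^{(N)})^\top\RR_\eta^\top$, the remaining task is to show $W_n M W_n^\top\succ 0$. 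First I would record two structural facts about $M$: it is symmetric (again because $\LL_{sm}^\top=\LL_{sm}$), and, since $\RR_\eta,\LL^{(N)},\LL_{sm}\in\CR$ with $\CR$ commutative, it reorders to $M=\RR_\eta\RR_\eta^\top\,\LL^{(N)}(\LL^{(N)})^\top\LL_{sm}=\LL^{(N)}(\LL^{(N)})^\top\LL_{sm}$; in particular the value of $\eta$ never matters, which is precisely why the hypothesis ``for some $\eta$'' is automatically met.

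Next I would diagonalize $M$. Because $\CR$ is a commutative algebra with common orthonormal eigenbasis $\{\psi_i\otimes\nu\}$ (circulant eigenvectors tensored with eigenvectors of the shared rotation axis, Section \ref{sec:rotcirc}), $M$ is normal with real spectrum, and its eigenvalues are the products of the constituent eigenvalues — exactly the numbers $\lambda_{ik}$ computed in (\ref{eq:eigrLLLR}). The case analysis immediately following (\ref{eq:eigrLLLR}) then does the sign work: under $k_m(\p,t)\ge 0$ and $|\alpha_m(\p,t)|<2\pi/N$ one gets $\lambda_{ik}\ge 0$ for all $i,k$, with $\lambda_{ik}>0$ for every interior index $1<i<N$, the only zeros coming from the boundary indices (where $1-\cos(2i\pi/N)=0$, respectively where $\cos(k\alpha_m)-\cos(k\alpha_m-2im\pi/N)$ vanishes for every $m$). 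Hence $M\succeq 0$, its kernel is the span of those few ``low‑frequency'' eigenvectors $\psi_i\otimes\nu$, and its least positive eigenvalue equals $\min_{1<i<N,\,q}\lambda_{iq}>0$ — the quantity invoked later in the theorem's proof.

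It remains to pass from $M\succeq 0$ to strict positivity of the compression $W_n M W_n^\top$, and I expect this to be the real obstacle. For $v\neq 0$ one has $v^\top W_n M W_n^\top v=(W_n^\top v)^\top M (W_n^\top v)\ge 0$, with equality only if $W_n^\top v\in\ker M$. Since $W_n=W\otimes I_3$ with $W=[\,I\ |\ 0\,]$, the range of $W_n^\top$ consists of the configurations in which a consecutive block of the trailing agents sits at the origin, so the claim reduces to: no nontrivial combination of the kernel eigenvectors of $M$ is supported away from that block. Each rotation component $\nu$ is carried by at most two of the circulant indices $i$ present in $\ker M$, so imposing vanishing on two consecutive agents yields a $2\times 2$ Vandermonde system whose determinant is nonzero because the two indices are distinct modulo $N$ — forcing the combination to be zero; this is exactly the assertion that $V_{rn}$ has full row rank. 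Therefore $V_{rn}\LL_{sm}V_{rn}^\top=W_n M W_n^\top\succ 0$. The delicate point, and where the hypotheses are genuinely used, is this last step: one must rule out any interior mode $1<i<N$ slipping a zero into $\ker M$ (which needs the strict bound $|\alpha_m|<2\pi/N$, not merely $\le$) and must exploit that $W$ deletes a \emph{consecutive} block of agents so the Vandermonde determinant cannot vanish; granting those, the positivity of the individual eigenvalue factors established in the previous paragraph is routine from the list (\ref{eq:eigrLLLR}).
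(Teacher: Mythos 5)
Your proposal is correct and follows essentially the same route as the paper: the identity is the factorization $V_{rn}=W_n\RR_\eta\LL^{(N)}$ plus symmetry of $\LL_{sm}$, semidefiniteness of $M=\RR_\eta\LL^{(N)}\LL_{sm}(\LL^{(N)})^\top\RR_\eta^\top$ comes from the commutative-algebra eigenvalue list (\ref{eq:eigrLLLR}), and positivity is then transferred to the compression $W_nMW_n^\top$. Where you genuinely add something is the last step: the paper argues purely by a dimension count (``nullity of $M$ is $6$, the compression is $(3N-6)\times(3N-6)$, thus full rank''), which for a general positive semidefinite matrix is not a valid inference --- a principal compression of size equal to $\operatorname{rank}(M)$ can still be singular unless $\ker M\cap\mathrm{range}(W_n^\top)=\{0\}$. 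Your argument that no nontrivial combination of the kernel modes $\psi_i\otimes\nu$ can vanish on two consecutive agents (each $\nu$ pairing with at most two distinct circulant frequencies, giving a nonsingular $2\times2$ Vandermonde system) supplies exactly this missing fact, so your version closes the gap the paper glosses over; geometrically it is the statement that a translated regular polygon with two coincident consecutive vertices at the origin is the zero configuration. One small caution: your parenthetical about why strict inequality $|\alpha_m|<2\pi/N$ is needed is not quite the right reason --- the delicate issue near the boundary is the sign of the interior factors $\cos(q\alpha_m)-\cos(q\alpha_m-2im\pi/N)$ in (\ref{eq:eigrLLLR}) (i.e.\ whether $M$ stays positive semidefinite at the low nonzero frequencies), not merely whether an extra mode slips into $\ker M$; since you, like the paper, take the sign analysis following (\ref{eq:eigrLLLR}) as given, this does not change the comparison, but it is the hypothesis to scrutinize if you want the proposition with explicit conditions on $\alpha_m$.
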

\begin{proof}
$\RR \LL \LL_{sm} \LL^\top \RR^\top$ is a symmetric matrix, then, its nullity is the algebraic multiplicity of the $0$ eigenvalue. 
Thus, $\dim\{\mathcal{N}(\RR \LL \LL_{sm} \LL^\top \RR^\top)\} = 6$. 
On the other hand, the dimension of $W_n(\RR \LL \LL_{sm} \LL^\top \RR^\top)W_n^\top = 3N-6$, and thus it is full rank.
Since $(\RR_{\eta} \LL \LL_{sm}(\alpha) \LL^\top \RR_{\eta}^\top)$ is positive semidefinite, then $V_{rn} \LL_{sm} V_{rn}^\top$ is at least positive semidefinite, but since it is full rank, the proposition is proven.
\end{proof}
\end{document}